\newcommand{\cM}{\EuScript{M}}
\newcommand{\rmF}{\mathrm{F}}
\newcommand{\rmT}{\mathrm{T}}
\newcommand{\tR}{\mathbb{R}}
\newcommand{\tS}{\mathbb{S}}
\newcommand{\tX}{\mathbb{X}}
\newcommand{\tY}{\mathbb{Y}}
\newcommand{\tZ}{\mathbb{Z}}
\newcommand{\bfA}{\mathbf{A}}
\newcommand{\bfC}{\mathbf{C}}
\newcommand{\bfD}{\mathbf{D}}
\newcommand{\bfE}{\mathbf{E}}
\newcommand{\bfF}{\mathbf{F}}
\newcommand{\bfL}{\mathbf{L}}
\newcommand{\bfO}{\mathbf{O}}
\newcommand{\bfP}{\mathbf{P}}
\newcommand{\bfQ}{\mathbf{Q}}
\newcommand{\bfR}{\mathbf{R}}
\newcommand{\bfU}{\mathbf{U}}
\newcommand{\bfV}{\mathbf{V}}
\newcommand{\bfX}{\mathbf{X}}
\newcommand{\bfY}{\mathbf{Y}}
\newcommand{\bfZ}{\mathbf{Z}}
\newcommand{\calH}{\mathcal{H}}
\newcommand{\calT}{\mathcal{T}}
\newcommand{\bt}{\begin{theorem}}
\newcommand{\et}{\end{theorem}}
\newcommand{\bl}{\begin{lemma}}
\newcommand{\el}{\end{lemma}}
\newcommand{\bp}{\begin{proposition}}
\newcommand{\ep}{\end{proposition}}
\newcommand{\bc}{\begin{corollary}}
\newcommand{\ec}{\end{corollary}}
\newcommand{\bd}{\begin{definition}\rm}
\newcommand{\ed}{\end{definition}}
\newcommand{\bex}{\begin{example}\rm}
\newcommand{\eex}{\end{example}}
\newcommand{\br}{\begin{remark}\rm}
\newcommand{\er}{\end{remark}}
\newcommand{\btbh}{\begin{table}[!ht]}
\newcommand{\etb}{\end{table}}
\newcommand{\bfgh}{\begin{figure}[!ht]}
\newcommand{\efg}{\end{figure}}
\newcommand{\bea}{\begin{eqnarray*}}
\newcommand{\eea}{\end{eqnarray*}}
\newcommand{\be}{\begin{eqnarray}}
\newcommand{\ee}{\end{eqnarray}}
\newcommand{\suml}{\sum\limits}
\newcommand{\ve}{\varepsilon}
\newcommand{\lm}{\lambda}
\def\wtilde{\widetilde}
\def\what{\widehat}
\newcommand{\ra}{\rightarrow}
\def\spaceR{\mathsf{R}}
\def\spaceC{\mathsf{C}} 
\newcommand{\bfw}{\mathbf{w}}
\def\rank{\mathop{\mathrm{rank}}}
\def\tr{\mathop{\mathrm{tr}}}
\newcommand{\diag}{\mathop{\mathrm{diag}}}
\def\adots{\mathinner{\mkern2mu\raise\p@\hbox{.}
\mkern2mu\raise4\p@\hbox{.}\mkern1mu
\raise7\p@\vbox{\kern7\p@\hbox{.}}\mkern1mu}}
\newcommand{\l@abcd}[2]{\hbox to\textwidth{#1\dotfill #2}}
\newtheorem{proposition}{Proposition}
\newtheorem{corollary}{Corollary}
\newtheorem{theorem}{Theorem}
\newtheorem{remark}{Remark}
\newtheorem{lemma}{Lemma}
\newtheorem{definition}{Definition}
\newtheorem{algorithm}{Algorithm}
\begin{document}

\title{Variations of Singular Spectrum Analysis for separability improvement: non-orthogonal decompositions of time series}

\author{Nina Golyandina \footnote{Corresponding author, nina@gistatgroup.com}, Alex Shlemov \footnote{shlemovalex@gmail.com}}

\date{\small Department of Statistical Modelling,
    Department of Statistical Modeling, St.~Petersburg State University,
    Universitetsky pr 28, St.~Petersburg 198504, Russia}

\maketitle

\begin{abstract}
Singular spectrum analysis (SSA) as a nonparametric tool for decomposition of an observed time series into sum of
interpretable components such as trend, oscillations and noise is considered.
The separability of these series components by SSA means the possibility of such decomposition.
Two variations of SSA, which weaken the separability conditions, are proposed. Both proposed approaches
consider inner products corresponding to oblique coordinate systems instead of the conventional
Euclidean inner product. One of the approaches performs iterations to obtain separating inner products.
The other method changes contributions of the components by involving the series derivative
to avoid component mixing. Performance of the suggested methods is demonstrated on simulated and real-life data.
\bigskip
\noindent
\textit{Keywords:}
Singular Spectrum Analysis, time series, time series analysis, time series decomposition, separability
\end{abstract}


\section{Introduction}

Singular spectrum analysis
\cite{Broomhead.King1986, Elsner.Tsonis1996, Ghil.etal2002, Golyandina.etal2001, Golyandina.Zhigljavsky2012, Vautard.etal1992, Zhigljavsky2010}
is a powerful method of time series analysis, which does not
require a parametric model of the time series given in advance and therefore SSA is very well suitable for exploratory analysis.
After an exploratory analysis has been performed, SSA enables to construct series models.

Singular spectrum analysis can solve very different problems in time series analysis which range from
the series decomposition on interpretable series components to forecasting,
missing data imputation, parameter estimation and many others.
The main problem is the proper decomposition of the time series. For example, if one forecasts trend, then
this trend should be extracted properly. For seasonal adjustment, the seasonality should be extracted correctly, and so on.

In \cite{Golyandina.etal2001, Nekrutkin1997}, the separability theory, which is responsible
for the proper decomposition and proper component extraction, was developed.
The separability of components means that the method is able to extract the time series components
from the observed series that is the sum of many components.
At the present time, there is a lot of publications with theory of separability and applications where separability is important, see
\cite{AtikurRahmanKhan.Poskitt2013, Awichi.Mueller2013, Gillard.Knight2013, Golyandina2010, Hassani.Zhigljavsky2009, Hudson.Keatley2010,
Itoh.Kurths2011, Itoh.Marwan2013, Nekrutkin2010, Patterson.etal2011, Ruch.Bester2013}
among others.

For reasonable time series lengths and noise levels,
trends, oscillations and noise are approximately separable by SSA \cite[Sections 1.5 and 6.1]{Golyandina.etal2001}.
However, the conditions of approximate separability can be restrictive, especially, for
short time series.

The separability conditions are closely related to the properties of the singular value decomposition (SVD), which is the essential part
of many statistical and signal processing methods: principal component analysis \cite{Jolliffe2002},
low-rank approximations \cite{Markovsky2012}, several subspace-based methods \cite{Veen.etal1993} including
singular spectrum analysis among many others.
The main advantage of the SVD is its optimality features and bi-orthogonality; the drawback for
approximation problems is the non-uniqueness of the SVD expansion if there  are coinciding singular values.

In subspace-based methods, the SVD is applied to a trajectory matrix with rows and columns consisting of
subseries of the initial series. In SSA, the obtained SVD components are grouped and the grouped matrices are
transferred back to the series. Thus, we obtain a decomposition of the initial time series $\tX$
into a sum of series components, e.g., $\tX=\widetilde\tS+\widetilde\tR$.
If we deal with a series $\tX=\tS+\tR$ containing two series components $\tS$ and $\tR$, which we want to find,
then (approximate) weak separability is by definition (approximate) orthogonality of subseries of $\tS$ and $\tR$,
which provides, due to the SVD bi-orthogonality, the existence of such a grouping that $\widetilde\tS$ and $\widetilde\tR$
are (approximately) equal to  $\tS$ and $\tR$ correspondingly.

Non-uniqueness  of the SVD in the case of coinciding singular values
implies the condition of disjoint sets of singular values in the groups corresponding to
different series components to avoid their possible mixture. This condition is necessary to obtain the so called strong
separability, when any SVD of the trajectory matrix provides the proper grouping.
In practice, the strong separability is needed
(see for more details Section~\ref{sec:sep}) and both conditions, orthogonality of component subseries and
disjoint sets of singular values of component trajectory matrices, should be fulfilled.

The paper presents two methods, Iterative O-SSA and DerivSSA, which help to weaken the separability conditions in SSA.
For simplicity, we describe the methods for separation of two series components;
separation of several components can be considered in analogous manner.

Orthogonality of subseries can be a strong limitation on the separated series. However, if we consider
orthogonality with respect to non-standard Euclidean inner product, conditions of separability
can be considerably weakened. This yields the first method called Oblique SSA (O-SSA) with the SVD step performed in a non-orthogonal
coordinate system.
The idea of Iterative Oblique SSA is similar to prewhitening that is frequently used in statistics as preprocessing:
if we know covariances between components, then we can perform linear transformation and obtain
uncorrelated components.
Since the `covariances' of the components are not known in advance, the iterative algorithm called Iterative Oblique SSA is suggested.
Contribution of the components can be changed in a specific way during the iterations to improve separability.

The second method called DerivSSA helps to change the component contributions with no change of the structure of the separated
series components. The approach consists in consideration of the series derivative together with the series itself.
For example, two singular values produced by a sinusoid are determined by its amplitude.
The derivative of a sine wave has the same frequency and changed amplitude, depending
on frequency: $f(x)=\sin(2\pi\omega x + \phi)$ has amplitude $1$, while its derivative has
amplitude $2\pi\omega$. This is just a simple example; the method works with non-stationary series,
not only with sinusoids.
The use of derivatives helps to overcome the problem when the approximate orthogonality holds but the series components
mix due to equal contributions.
It seems that this approach is simpler and  more general than the SSA-ICA (SSA with Independent Component Analysis) approach considered in \cite[Section 2.5.4]{Golyandina.Zhigljavsky2012}.

Since both suggested approaches do not have approximating features, they cannot replace Basic SSA
and therefore should be used in a nested manner. This means that Basic SSA extracts mixing series
components (e.g. first we use Basic SSA for denoising) and then one of the proposed methods separates the
mixing components. Let us demonstrate the nested use of the methods by an example. Let $\tX=(x_1,\ldots,x_{N})$ be the series of length $N$,
$\tX={\tX}^{(1)}+{\tX}^{(2)}+{\tX}^{(3)}$. The result of Basic SSA is
$\tX=\widetilde{\tX}^{(1,2)}+\widetilde{\tX}^{(3)}$, the result of the considered method is
$\widetilde{\tX}^{(1,2)}=\widetilde{\tX}^{(1)}+\widetilde{\tX}^{(2)}$ and the final result is
$\tX=\widetilde{\tX}^{(1)}+\widetilde{\tX}^{(2)}+\widetilde{\tX}^{(3)}$.

The paper is organized as follows. We start with a short description of the algorithm of Basic SSA and standard separability notion (Section~\ref{sec:BasicSSA}).
The next two sections~\ref{sec:ObliqueSSA} and \ref{sec:DerivSSA} are devoted to the variations of singular spectrum analysis.
In Section~\ref{sec:ObliqueSSA}, Oblique SSA is considered. In Section~\ref{sec:DerivSSA},
SSA involving series derivatives is investigated.
Each section contains numerical examples of algorithm application.
In Section~\ref{sec:examples}, both methods are applied to real-life time series.
Conclusions are contained in Section~\ref{sec:conclusion}.
Since the methods are based on the use of inner products and decompositions in oblique coordinate systems, we put the necessary definitions and statements into Appendix~\ref{sec:app}.

An implementation of the proposed algorithms is contained in the  \textsc{R}-package \textsc{Rssa} as of version 0.11 \cite{Korobeynikov.etal2014},
which is thoroughly described for Basic SSA in \cite{Golyandina.Korobeynikov2013}.
Efficiency of the implementation of Basic SSA and its variations is based on the use of the approach described in  \cite{Korobeynikov2010}.
The code for most of the presented examples can be found in the documentation of \textsc{Rssa}.

\section{Basic SSA}
\label{sec:BasicSSA}
\subsection{Algorithm}

Consider a real-valued time series $\tX=\tX_N=(x_1,\ldots,x_{N})$ of length $N$.
Let $L$ ($1<L<N$) be some integer called {\em window length} and $K=N-L+1$.

For convenience, denote $\cM_{L,K}$ the space of matrices of size $L\times K$,
$\cM_{L,K}^{(H)}$ the space of Hankel matrices of size $L\times K$,
$X_i=(x_{i},\ldots,x_{i+L-1})^\rmT$, $i=1,\ldots,K$, the {\em $L$-lagged vectors}
and $\bfX=[X_1:\ldots:X_K]$ the {\em $L$-trajectory matrix} of the series $\tX_N$.
Define the embedding operator $\calT: \spaceR^{N} \mapsto \cM_{L,K}$
as $\calT(\tX_N)=\bfX$.

Also introduce the projector $\calH$ (in the Frobenius norm) of
 $\cM_{L,K}$ to $\cM_{L,K}^{(H)}$, which performs the projection by the change of
 entries on auxiliary
diagonals $i+j=\mathrm{const}$ to their averages along the diagonals.

The algorithm of Basic SSA consists of four steps.

\smallskip
{\bf 1st step: Embedding}.
Choose $L$. Construct the $L$-trajectory matrix:
 $\bfX=\calT(\tX_N)$.

\smallskip
{\bf 2nd step: Singular value decomposition (SVD)}.
Consider the SVD of the trajectory matrix:
\be
\label{eq:elem_matr}
\bfX=\sum_{i=1}^d \sqrt{\lambda_i}U_i V_i^\rmT=\bfX_1 + \ldots + \bfX_d,
\ee
where $\sqrt{\lambda_i}$ are singular values,
$U_i$ and $V_i$ are the left and right singular vectors of $\bfX$,
$\lambda_1\geq\ldots\geq \lambda_d > 0$, $d=\rank(\bfX)$. The number $d$ is called
\emph{$L$-rank} of the series $\tX$.

The triple $(\sqrt{\lm_i},U_i,V_i)$ is called $i$th {\em
eigentriple} (abbreviated as ET).

\smallskip
{\bf 3rd step: Eigentriple grouping}.
The grouping procedure
partitions the set of indices $\{1,\ldots,d\}$ into $m$ disjoint subsets
$I_1,\ldots,I_p$. This step is less formal. However, there are different
recommendations on grouping related to separability issues
briefly described in Section~\ref{sec:sep}.

Define $\bfX_I=\sum_{i\in I} \bfX_i$.
The expansion \eqref{eq:elem_matr} leads to the decomposition
\be
\label{eq:mexp_g}
\bfX=\bfX_{I_1}+\ldots+\bfX_{I_p}.
\ee

If $p=d$ and $I_j=\{j\}$,
$j=1,\ldots,d$, then the corresponding grouping is called \emph{elementary}.

\smallskip
{\bf 4th step: Diagonal averaging}.
Obtain the series by diagonal averaging of the matrix components of \eqref{eq:mexp_g}:
$\wtilde\tX^{(k)}_N = \calT^{-1} \calH \bfX_{I_k}$.

\smallskip
Thus, the algorithm yields the decomposition of the observed time series
\be
\label{eq:sexp_f}
  \tX_N = \suml_{k=1}^p\wtilde\tX^{(k)}_N.
\ee
The reconstructed components produced by the elementary grouping will be called
\emph{elementary reconstructed series}.

\subsection{Separability by Basic SSA}
\label{sec:sep}
Notion of separability is very important to understand how SSA works.
Separability of two time series $\tX^{(1)}_N$ and $\tX^{(2)}_N$ signifies the possibility of extracting
$\tX^{(1)}_N$ from the observed series $\tX_N=\tX^{(1)}_N + \tX^{(2)}_N$.
This means that there exists a grouping at Grouping step such that
$\wtilde\tX^{(m)}_N=\tX^{(m)}_N$.

Let us define the separability formally.  Let $\bfX^{(m)}$ be the trajectory matrices of the considered series,
$\bfX^{(m)}=\sum_{i=1}^{d_m} \sqrt{\lambda_{m,i}}U_{m,i} V_{m,i}^\rmT$, $m=1,2$, be their SVDs.
The column and row spaces of the trajectory matrices are called \emph{column} and \emph{row trajectory spaces} correspondingly.

\begin{definition}
Let $L$ be fixed. Two series $\tX^{(1)}_N$ and $\tX^{(2)}_N$ are called \emph{weakly separable}, if their column trajectory spaces
are orthogonal and the same is valid for their row trajectory spaces, that is, $(\bfX^{(1)})^\rmT \bfX^{(2)}=\bf0_{K,K}$ and
 $\bfX^{(1)} (\bfX^{(2)})^\rmT =\bf0_{L,L}$.
\end{definition}

\begin{definition}
Two series $\tX^{(1)}_N$ and $\tX^{(2)}_N$ are called \emph{strongly separable}, if they are weakly separable
and the sets of singular values of their $L$-trajectory matrices are disjoint, that is, $\lambda_{1,i} \neq \lambda_{2,j}$ for any $i$ and $j$.
\end{definition}

By definition, separability means orthogonality of the column and row spaces
of the trajectory matrices of the series components $\tX^{(1)}_N$ and $\tX^{(2)}_N$.
For approximate (asymptotic) separability with $\wtilde\tX^{(m)}_N\approx \tX^{(m)}_N$
we need the condition of approximate (asymptotic) orthogonality of subseries of the considered components.
Asymptotic separability is considered as $L,K,N \ra \infty$.

For sufficiently long time series, SSA can approximately separate, for example, signal and noise, sine waves with different
frequencies, trend and seasonality \cite{Golyandina.etal2001,
    Golyandina.Zhigljavsky2012}.

Let us demonstrate the separability of two sinusoids with frequencies $\omega_1$
and $\omega_2$: $x^{(i)}_n=A_i\cos(2\pi\omega_i n+\phi_i)$. These sinusoids are asymptotically separable, that is, their subseries
are asymptotically orthogonal as their length tends to infinity.
However, the rate of convergence depends on the difference between the frequencies.
If they are close and the time series length is not long enough, the series can be far from
orthogonal and therefore not separable.

Weak separability means that at SVD step there exists
such an SVD that admits the proper grouping. The problem of possibility of a non-separating SVD expansion is related to non-uniqueness of the SVD
in the case of equal singular values.
Strong separability means that any SVD of the series trajectory matrix admits the proper grouping.
Therefore, we need strong (approximate) separability for the use in practice.
For example, two sinusoids with equal amplitudes are asymptotically weakly separated,
but asymptotically not strongly separated and therefore are mixed in the decomposition.

\subsubsection{Separability measure}
\label{sec:sep_mes}
Very helpful information for  detection of separability and group identification is contained in the so-called
$\bfw$-correlation matrix.
This matrix consists of weighted
cosines of angles between the reconstructed time series components. The weights
reflect the number of entries of the time series terms into its trajectory
matrix.

Let $w_n=\#\{(i,j): 1\leq i \leq L, 1\leq j \leq K, i+j=n+1\}$. Define the $\bfw$-scalar product of time series of length $N$
as $(\tY_N, \tZ_N)_\bfw=\sum_{n=1}^N w_n y_n z_n = \langle \bfY, \bfZ\rangle_\rmF$. Then
\bea
\rho_\bfw (\tY_N, \tZ_N) = (\tY_N, \tZ_N)_\bfw/(\|\tY_N\|_\bfw \|\tZ_N\|_\bfw).
\eea

Well separated components in \eqref{eq:sexp_f} have small correlation whereas poorly separated
components generally have large correlation. Therefore, looking at the
matrix  of $\bfw$-correlations between elementary reconstructed series $\wtilde\tX^{(i)}_N$ and $\wtilde\tX^{(j)}_N$ one can find groups of correlated
series components and use
this information for the consequent grouping. One of the rules is not to include
the correlated components into different groups.
Also, $\bfw$-correlations can be used for checking the grouped decomposition.

It is convenient to depict in absolute magnitude the matrix of $\bfw$-correlations between the series components
graphically in the white-black scale, where small correlations are shown in white, while correlations with moduli close to 1
are shown in black.

\subsubsection{Scheme of Basic SSA application}

Let us briefly describe the general scheme of Basic SSA application, thoroughly described in
\cite{Golyandina.etal2001,Golyandina.Zhigljavsky2012}:

\begin{itemize}
\item
Choice of window length $L$ in accordance with a-priori recommendations (see, in addition, \cite{Golyandina2010}).

\item
Execution of Embedding and Decomposition steps.

\item
Analysis of the eigentriples and the $\bfw$-correlation matrix to perform grouping of eigentriples.
The main principle is: eigenvectors repeat the form of a series component that produces these eigentriples. $\bfw$-Correlations
also provide a guess for proper grouping.

\item
Execution of Grouping and Reconstruction steps to obtain the desired series decomposition.

\item
If separability does not take place for the given $L$ and the obtained decomposition is not appropriate, then the change of the window length $L$
is recommended.
\end{itemize}

Note that the proper grouping to obtain a suitable series decomposition can be impossible if the signal components (described, as a rule,
by a number of leading SVD components) are mixed. For example,
if a signal eigenvector contains both periodic and slowly varying components, this means
that the trend and periodic components are not separable, at least for the chosen window length $L$.
If we see the mixture of two sine-waves with different frequencies, this means that
these sine-waves are not separable for this $L$.

If it appears that for the chosen $L$ there is no separability (weak or strong),
the attempt to obtain separability  is performed with other choices of $L$.
For example, a possible lack of strong separability between a trend of complex form and a seasonality
can be overcome by means of the use of small window lengths. However, weak separability can be weakened by this trick and
Sequential SSA should be used to obtain an accurate decomposition of the residual after the trend extraction.

For the majority of time series, SSA with a proper choice of window length is able to separate series components
and to obtain a desirable series decomposition. However, sometimes Basic SSA cannot separate certain components
such as short sine wave series with close frequencies or sine waves with equal amplitudes.

\subsubsection{Identification of separated sinusoids}

Separation of sine-wave components is of special interest. Each sine-wave component generates two elementary series components,
which have correlation close to 1.
If a sinusoid is separated from the residual, maybe, approximately, then two elementary components produced by it
are almost not correlated with the other elementary components
and therefore we will see a black square $2\times 2$ on the $\bfw$-correlation matrix
of elementary components.

To find two SVD components corresponding to a sine-wave, scatterplots of
eigenvectors (which are approximately sine and cosine) can be also used.
If the period value is integer, the scatterplot of sine vs cosine looks like a regular polygon, where the number of vertices
is equal to the period.

\begin{figure}[!htb]
    \centering
    \begin{subfigure}{.32\columnwidth}
        \centering
        \includegraphics{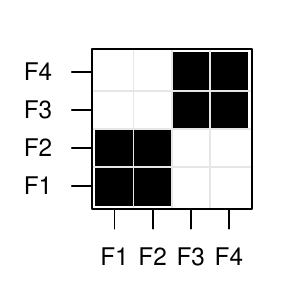}
        \caption*{(A)}
    \end{subfigure}
    \begin{subfigure}{.32\columnwidth}
        \centering
        \includegraphics{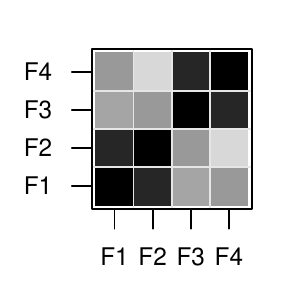}
        \caption*{(B)}
    \end{subfigure}
    \begin{subfigure}{.32\columnwidth}
        \centering
        \includegraphics{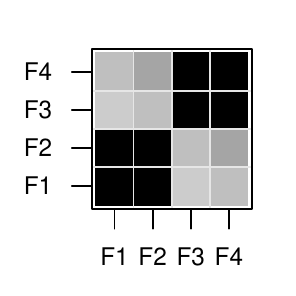}
        \caption*{(C)}
    \end{subfigure}
    \caption{Sum of two sinusoids: $\bfw$-correlation matrices for different types of separability}
    \label{fig:demosin_wcor}
\end{figure}

\begin{figure}[!htb]
    \centering
    \begin{subfigure}{0.48\linewidth}
        \centering
        \includegraphics{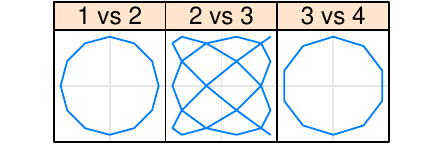}
        \caption*{(A)}
    \end{subfigure}
    \begin{subfigure}{0.48\linewidth}
        \centering
        \includegraphics{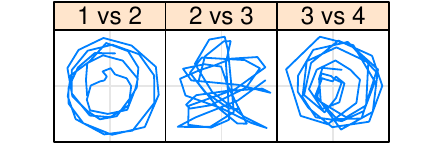}
        \caption*{(C)}
    \end{subfigure}
    \caption{Sum of two sinusoids: scatterplots of eigenvectors with good (left) and bad (right) separability}
    \label{fig:demosin_scatter}
\end{figure}

For example, consider the series $\tX_N$,
where $x_n=x_n^{(1)}+x_n^{(2)}$, $x_n^{(k)}=A_k \sin(2\pi \omega_k n)$, the series length $N=119$, with three different sets of parameters: \\
(A) `strong separability', $A_1=2$, $A_2=1$, $\omega_1=1/12$, $\omega_2=1/10$; \\
(B) 'weak separability, no strong separability', $A_1=A_2=1$, $\omega_1=1/12$, $\omega_2=1/10$;\\
(C) 'no weak separability', $A_1=2$, $A_2=1$, $\omega_1=1/12$, $\omega_1=1/13$, the series is corrupted by Gaussian white noise
with standard deviation 4.

 The difference between good and bad separability is clearly seen in Fig.~\ref{fig:demosin_wcor} and \ref{fig:demosin_scatter}.
One can see that the matrices of weighted correlations for the examples (B) and (C) are very similar, although
in general weighted correlations for the example (B) can be arbitrary.
Figure~\ref{fig:demosin_scatter} shows the scatterplots of eigenvectors for the examples (A) and (C).
The pairs of eigenvectors produced by exactly separated sinusoids form regular polygons.

\subsection{Series of finite rank and series governed by linear recurrence relations}
\label{sec:lrr}
Let us describe the class of series of finite rank, which is natural for SSA.
Note that only such time series can be exactly separated and exactly continued by SSA
\cite[Section 2.2 and Chapter 5]{Golyandina.etal2001}.

We define $L$-rank of a series $\tX_N$ as the rank of its $L$-trajectory matrix.
Series with rank-deficient trajectory matrices are of special interest.
A time series is called \emph{time series of finite rank} $r$ if its
$L$-trajectory matrix has rank $r$ for any $L\ge r$ (it is convenient to
assume that $L\le K$).

Under some not restrictive conditions, a series $\tS_N$ of finite rank $r$
is governed by a linear recurrence relation (LRR) of order $r$, that is
\be
\label{eq:lrf}
s_{i+r}=\sum_{k=1}^r a_k s_{i+r-k},\ 1\leq i\leq N-r,\ a_r\neq 0.
\ee

The LRR~\eqref{eq:lrf} is called minimal, since it is unique and has minimal order among LRRs governing $\tS_N$.
Let us describe how we can restore the form of the time series by means of the minimal LRR.

\begin{definition}
    A polynomial $P_r(\mu)=\mu^r - \sum_{k=1}^r a_k \mu^{r-k}$ is called a
    {characteristic polynomial} of the LRR \eqref{eq:lrf}.
\end{definition}
Let the time series $\tS_{\infty}=(s_1,\ldots,s_n,\ldots)$ satisfy the LRR
\eqref{eq:lrf} for $i\geq 1$. Consider the characteristic
polynomial of the LRR \eqref{eq:lrf} and denote its different (complex) roots by
$\mu_1,\ldots,\mu_p$, where $p \leq r$.  All these roots are non-zero, since
$a_r\neq 0$.  Let the multiplicity of the root $\mu_m$ be $k_m$, where $1\leq
m\leq p$ and $k_1+\ldots+k_p=r$.
We will call the set $\{\mu_j\}_{m=1}^p$ \emph{characteristic (or signal) roots} of the series governed by an LRR.
Note that in the framework of SSA non-minimal LRRs, which have so called extraneous roots in addition to the signal ones, are considered
and the extraneous roots are studied (\cite{Usevich2010}); however, here we will deal only with characteristic roots to describe
the signal model.

It is well-known that the time series $\tS_{\infty}=(s_1,\ldots,s_n,\ldots)$
satisfies the LRR $\eqref{eq:lrf}$ for all $i\ge 0$ if and only if
\be
\label{eq:GEN_REQ}
s_n = \suml_{m=1}^p \left(\suml_{j=0}^{k_m-1} c_{m,j} n^j\right) \mu_m^n.
\ee
for some $c_{m,j}\in \spaceC$. For real-valued time series, \eqref{eq:GEN_REQ} implies
that the class of time series governed by LRRs consists of sums of products
of polynomials, exponentials and sinusoids.

The important advantage of SSA is that although the model \eqref{eq:GEN_REQ} of signals is involved in theoretical results,
the SSA algorithm does not perform explicit estimation of the model parameters for reconstruction and forecasting.
This provides the possibility to deal with signals that are locally approximated by the model; in particular,
to extract slowly-varying trends and modulated sine waves.
The indicated feature of the SSA approach holds for the variations considered below.

\section{Oblique SSA}
\label{sec:ObliqueSSA}
Although many interpretable series components like trend (a slowly varying component) and seasonality are asymptotically orthogonal,
for the given time series length the orthogonality can be not reached even approximately.
Therefore, it would be helpful to weaken the orthogonality condition.
The suggested approach consists in using an orthogonality, which still means the equality of an inner product to 0,
but this is a non-ordinary inner product which is adapted to time series components, which we want
to separate.

It is well-known that any inner product in Euclidean space is associated with a symmetric positive-definite matrix
$\bfA$ and is defined as $\langle X_1, X_2 \rangle_\bfA=(\bfA X_1,X_2)$.
The standard inner product is given by the identity matrix.
Inner product implies $\bfA$-ortho\-go\-na\-li\-ty of the vectors if $\langle X_1, X_2 \rangle_\bfA = 0$.
If the matrix $\bfA$ is semi-definite, then it produces the inner product given in its column (or row, it is the same due to symmetry) space.
Below, considering $\langle X_1, X_2 \rangle_\bfA$, we will always assume that the vectors $X_i$, $i=1,2$, belong to the column space of $\bfA$.

Thus, non-standard Euclidean inner products induce such notions as oblique coordinate systems,
ortho\-go\-na\-li\-ty of vectors, which are oblique in ordinary sense, and so on.

Let us present an elementary example. Let $X=(1,2)^\rmT$ and $Y=(1,1)^\rmT$. Certainly, these vectors are
not orthogonal in the usual sense: $(X,Y)=3$. However, if we define
\begin{equation}
\label{eq:A}
\bfA=
\left(\begin{array}{rr}
5 &-3\\
-3 &2
\end{array}\right),
\end{equation}
then $\langle X, Y\rangle_\bfA =(\bfA X, Y)=0$ and $(\bfO_\bfA X,\bfO_\bfA Y)=0$ for any $\bfO_\bfA$ such that $\bfO_\bfA^\rmT \bfO_\bfA=\bfA$, e.g.
\begin{equation*}
\bfO_\bfA=
\left(\begin{array}{rr}
-1 &1\\
2 & -1
\end{array}\right).
\end{equation*}
This means that $\{X,Y\}$ is an orthogonal basis with respect to the $\bfA$-inner product $\langle \cdot, \cdot \rangle_\bfA$
and $\bfO_\bfA$ corresponds to an orthogonalizing map. The matrix $\bfA$ can be chosen such that
$X$ and $Y$ have any $\bfA$-norm. The choice \eqref{eq:A} corresponds to $\bfA$-orthonormality.

\smallskip
To describe a so called Oblique SSA, let us introduce the SVD of a matrix $\bfX$ produced by two oblique bases,
$\bfL$-orthonormal and $\bfR$-orthonormal correspondingly, in the row and column spaces (Definition~\ref{def:LRSVD}).
We say that $\bfX=\sum_{i=1}^d \sigma_i P_i Q_i^\rmT$
is the $(\bfL,\bfR)$-SVD, if $\{P_i\}_{i=1}^d$ is an $\bfL$-orthonormal system and $\{Q_i\}_{i=1}^d$ is an $\bfR$-orthonormal system, that is, the decomposition
is $(\bfL,\bfR)$-biorthogonal. This kind of SVD is called Restricted SVD (RSVD) given by the triple $(\bfX,\bfL,\bfR)$, see \cite{DeMoor.Golub1991} for details.
Mathematics related to inner products $\langle \cdot, \cdot \rangle_\bfA$ with positive-semidefinite matrix $\bfA$
 and the corresponding RSVD is shortly described in Appendix~\ref{sec:app} from the viewpoint of decompositions into
 a sum of elementary matrices. We formulate
 the necessary definitions and propositions in a convenient form to make the suggested algorithms clearer.

\emph{Oblique SSA (O-SSA)} is the modification of the Basic SSA algorithm described in Section~\ref{sec:BasicSSA},
where the SVD step is changed by the $(\bfL,\bfR)$-SVD
for some matrices $\bfL$ and $\bfR$ consistent with $\bfX$ (see Definition~\ref{def:agreed}). We will use the notions introduced in the algorithm
of Basic SSA also for its oblique modification.

Proposition~\ref{prop:calcLRSVD} provides the algorithm which reduces the $(\bfL,\bfR)$-SVD to the ordinary SVD.

\begin{algorithm}
\label{alg:LRSVD}
\rm
\textbf{($(\bfL,\bfR)$-SVD.)}

\smallskip\noindent
\textbf{Input}: $\bfY$, $(\bfL,\bfR)$ consistent with $\bfY$.

\smallskip\noindent
\textbf{Output}: The $(\bfL,\bfR)$-SVD in the form \eqref{eq:LRSVD}.

\begin{enumerate}
\item
Calculate $\bfO_\bfL$ and $\bfO_\bfR$, e.g., by Cholesky decomposition.
\item
Calculate $\bfO_\bfL\bfY\bfO_\bfR^\rmT$.
\item
Find the ordinary SVD decomposition \eqref{eq:LVSVDcalc}.
\item
$\sigma_i=\sqrt{\lambda_i}$, $P_i=\bfO_\bfL^\dag U_i$ and $Q_i=\bfO_\bfR^\dag V_i$. where $\dag$ denotes pseudo-inverse.
\end{enumerate}

\end{algorithm}

Note that if $\bfL$ and $\bfR$ are the identity matrices, then Oblique SSA coincides with Basic SSA,
$\sigma_i=\sqrt{\lambda_i}$, $P_i=U_i$ and $Q_i=V_i$.

\subsection{Separability}
\label{sec:LRsepar}
The notion of weak and strong $(\bfL,\bfR)$-separability, which is similar to conventional separability
described in Section~\ref{sec:sep}, can be introduced.
Again, let $\tX=\tX^{(1)} + \tX^{(2)}$, $\bfX$ be its trajectory matrix, $\bfX^{(m)}$ be the trajectory matrices of the series components,
$\bfX^{(m)}=\sum_{i=1}^{r_m} \sigma_{m,i}P_{m,i} Q_{m,i}^\rmT$ be their $(\bfL,\bfR)$-SVDs, $m=1,2$.
We assume that  $\bfL$ and $\bfR$ are consistent with $\bfX$, $\bfX^{(1)}$ and $\bfX^{(2)}$.

\begin{definition}
Let $L$ be fixed. Two series $\tX^{(1)}_N$ and $\tX^{(2)}_N$ are called \emph{weakly $(\bfL,\bfR)$-separable}, if their column trajectory spaces
are $\bfL$-orthogonal and their row trajectory spaces are $\bfR$-orthogonal, that is, $(\bfX^{(1)})^\rmT \bfL \bfX^{(2)}=\bf0_{K,K}$ and
 $\bfX^{(1)} \bfR (\bfX^{(2)})^\rmT =\bf0_{L,L}$.
\end{definition}

\begin{definition}
Two series $\tX^{(1)}_N$ and $\tX^{(2)}_N$ are called \emph{strongly $(\bfL,\bfR)$-separable}, if they are weakly $(\bfL,\bfR)$-separable
and $\sigma_{1,i} \neq \sigma_{2,j}$ for any $i$ and $j$.
\end{definition}

The $(\bfL,\bfR)$-separability of two series components means $\bfL$-orthogonality of their
subseries of length $L$ and $\bfR$-orthogonality of the
subseries of length $K=N-L+1$.

The following theorem shows that the $(\bfL,\bfR)$-separability is in a sense much less restrictive
than the ordinary one.

\begin{theorem}
\label{th:sepSVD}
Let $\tX=\tX^{(1)}+\tX^{(2)}$ be the series of length $N$, $L$ be the window length and the $L$-rank of $\tX$ be equal to $r$.
Let $\tX^{(m)}$ be the series of $L$-rank $r_m$, $m=1,2$, $r_1+r_2=r$.
Then there exist separating matrices $\bfL\in \cM_{L,L}$ and
$\bfR\in \cM_{K,K}$ of rank $r$ such that the series $\tX^{(1)}$ and $\tX^{(2)}$ are strongly $(\bfL,\bfR)$-separable.
\end{theorem}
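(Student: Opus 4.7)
The plan is to exploit the rank hypothesis $r_1+r_2=r$ to force the column and row spaces of $\tX^{(1)}$ and $\tX^{(2)}$ to be in direct sum position, then to build $\bfL$ and $\bfR$ that make these direct sums $\bfL$-orthogonal and $\bfR$-orthogonal respectively, and finally to use the remaining scaling freedom in $\bfL$ (or $\bfR$) to split the spectra of $\bfX^{(1)}$ and $\bfX^{(2)}$.

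First I will establish the direct sum decompositions. Let $\calU_m$ and $\calV_m$ be the column and row spaces of $\bfX^{(m)}$, and $\calU,\calV$ the column and row spaces of $\bfX$. Since $\bfX=\bfX^{(1)}+\bfX^{(2)}$, every column of $\bfX$ lies in $\calU_1+\calU_2$, so $r=\dim\calU\le\dim(\calU_1+\calU_2)\le r_1+r_2=r$, whence $\calU=\calU_1\oplus\calU_2$. The same argument applied to $\bfX^\rmT$ yields $\calV=\calV_1\oplus\calV_2$.

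Next I will construct the separating matrices. Take any bases $\bfU_m\in\cM_{L,r_m}$ of $\calU_m$ and concatenate them as $\bfU=[\bfU_1:\bfU_2]\in\cM_{L,r}$ (of full column rank~$r$); similarly form $\bfV=[\bfV_1:\bfV_2]\in\cM_{K,r}$. For any positive diagonal $\bfD=\diag(d_1,\ldots,d_r)$ and $\bfE=\diag(e_1,\ldots,e_r)$ I put
\begin{equation*}
\bfL=\bfU(\bfU^\rmT\bfU)^{-1}\bfD(\bfU^\rmT\bfU)^{-1}\bfU^\rmT,\qquad
\bfR=\bfV(\bfV^\rmT\bfV)^{-1}\bfE(\bfV^\rmT\bfV)^{-1}\bfV^\rmT.
\end{equation*}
Both matrices are symmetric positive semi-definite of rank~$r$ with images $\calU$ and $\calV$ respectively, so they are consistent with $\bfX,\bfX^{(1)},\bfX^{(2)}$ in the sense of Definition~\ref{def:agreed}; a direct calculation gives $\bfU^\rmT\bfL\bfU=\bfD$ and $\bfV^\rmT\bfR\bfV=\bfE$. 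Because $\bfD$ is diagonal, the off-diagonal block $\bfU_1^\rmT\bfL\bfU_2$ vanishes, hence $\calU_1\perp_\bfL\calU_2$, and likewise $\calV_1\perp_\bfR\calV_2$. Since the columns of $\bfX^{(m)}$ lie in $\calU_m$ and the rows in $\calV_m$, weak $(\bfL,\bfR)$-separability follows immediately: $(\bfX^{(1)})^\rmT\bfL\bfX^{(2)}=\mathbf{0}_{K,K}$ and $\bfX^{(1)}\bfR(\bfX^{(2)})^\rmT=\mathbf{0}_{L,L}$.

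Finally I will use the scaling freedom in $\bfD$ to upgrade to strong separability. By Algorithm~\ref{alg:LRSVD}, the $(\bfL,\bfR)$-singular values of $\bfX^{(m)}$ coincide with the ordinary singular values of $\bfO_\bfL\bfX^{(m)}\bfO_\bfR^\rmT$ where $\bfO_\bfL^\rmT\bfO_\bfL=\bfL$. If I multiply the first $r_1$ diagonal entries of $\bfD$ by $c^2>0$ and keep the remaining entries fixed, then the restriction of $\langle\cdot,\cdot\rangle_\bfL$ to $\calU_1$ is multiplied by $c^2$ and its restriction to $\calU_2$ is unchanged; because columns of $\bfX^{(m)}$ sit in $\calU_m$ and $\calU_1\perp_\bfL\calU_2$, this homothety passes through to the $(\bfL,\bfR)$-SVD and rescales every $\sigma_{1,i}$ by $c$ while leaving every $\sigma_{2,j}$ fixed. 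The two finite sets $\{\sigma_{1,i}\}$ and $\{\sigma_{2,j}\}$ are therefore disjoint for all but finitely many $c>0$, which yields strong $(\bfL,\bfR)$-separability. The main obstacle in the argument is this last step: one must verify that the block rescaling of $\bfD$ really acts as an independent scalar homothety on the spectrum of $\bfX^{(1)}$ only, which hinges on the weak separability established in the previous paragraph (and on the fact that the $(\bfL,\bfR)$-SVD of $\bfX$ splits as the union of the $(\bfL,\bfR)$-SVDs of the two components whenever the column and row spaces are $\bfL$- and $\bfR$-orthogonal).
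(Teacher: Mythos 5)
Your proof is correct and follows essentially the same route as the paper's: you concatenate bases of the component column and row spaces and take $\bfL,\bfR$ of the form $(\bfU^\dag)^\rmT\bfD\,\bfU^\dag$-type Gram matrices (the paper's choice is your $\bfD=\bfI$ case via Proposition~\ref{prop:orth}), then exploit the remaining diagonal scaling freedom to separate the two singular spectra (the paper delegates this to Proposition~\ref{prop:reord}). Your write-up is somewhat more explicit than the paper's in two places it leaves implicit --- the direct-sum argument showing $\bfP$ and $\bfQ$ have full rank, and the verification that rescaling one diagonal block multiplies only the $\sigma_{1,i}$ by $c$ --- but the underlying construction is the same.
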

\begin{proof}
Denote $\{P_i^{(m)}\}_{i=1}^{r_m}$ a basis of the column space of $\bfX^{(m)}$ and
$\{Q_i^{(m)}\}_{i=1}^{r_m}$ a basis of the row space of $\bfX^{(m)}$, $m=1,2$; e.g.,
$P_i^{(m)}=P_{m,i}\in \spaceR^L$, $Q_i^{(m)}=Q_{m,i}\in \spaceR^K$.
Define
\begin{eqnarray*}
\bfP=[P_1^{(1)}:\ldots:P_{r_1}^{(1)}:P_1^{(2)}:\ldots:P_{r_2}^{(2)}],\\
\bfQ=[Q_1^{(1)}:\ldots:Q_{r_1}^{(1)}:Q_1^{(2)}:\ldots:Q_{r_2}^{(2)}].
\end{eqnarray*}
By the theorem conditions, the matrices $\bfP$ and $\bfQ$ are of full rank.
Since $\bfP^\dag$ and $\bfQ^\dag$ orthonormalize the columns of the matrices $\bfP$ and $\bfQ$ (Proposition~\ref{prop:orth}),
then the trajectory matrices $\bfX^{(1)}$ and $\bfX^{(2)}$ are $(\bfL,\bfR)$ bi-orthogonal for $\bfL=(\bfP^\dag)^\rmT \bfP^\dag$
and $\bfR=(\bfQ^\dag)^\rmT \bfQ^\dag$.
Therefore the series $\tX^{(1)}$ and $\tX^{(2)}$ are $(\bfL,\bfR)$-separable.

Proposition~\ref{prop:reord} shows that it is possible to change $\sigma_{m,i}$ keeping bi-orthogonality,
that is, it explains how to get strong separability not corrupting weak one.
\end{proof}

\begin{remark}
\label{rem:roots}
  Consider two time series governed by minimal LRRs of orders $r_1$ and $r_2$, $r_1+r_2 \leq \min(L,K)$.
  The conditions of Theorem~\ref{th:sepSVD} fulfill if and only if the sets of characteristic roots of the series are disjoint.
  Really, the sets of characteristic roots are disjoint if and only if the column and row spaces of $L$-trajectory matrices intersect only in $\{0\}$,
  that is,  $\bfP$ and $\bfQ$ are of full rank.
\end{remark}

\begin{remark}
Theorem~\ref{th:sepSVD} together with Remark~\ref{rem:roots} shows that any two times series
governed by LRRs with different characteristic roots can be separated by some $(\bfL,\bfR)$-SVD
for sufficiently large series and window lengths.
\end{remark}

Note that
Theorem~\ref{th:sepSVD} is not constructive, since the trajectory spaces of the separated series should be known
for exact separation.
However, we can try to estimate these spaces and thereby to improve the separability.

\smallskip
{\bf Measures of oblique separability}.
If Oblique SSA does not separate the components exactly, a measure of separability
is necessary. We can consider the analogue of $\bfw$-correlations described in Section~\ref{sec:sep_mes}, since they are defined through
the Frobenius inner products of trajectory matrices and therefore can be generalized; see Appendix~\ref{sec: minnerprod}
for definition of $\rho_{\bfL,\bfR}$ in \eqref{eq:ABcorr}.
Define $(\bfL,\bfR)$ $\bfw$-correlation between the reconstructed series $\tilde\tX^{(1)}$ and $\tilde\tX^{(2)}$
as $\rho_{(\bfL,\bfR)}(\tilde\bfX^{(1)}, \tilde\bfX^{(2)})$.
Note that due to diagonal averaging, the column and row spaces of $\tilde\bfX^{(m)}$
do not necessarily belong to the column spaces of $\bfL$  and $\bfR$ correspondingly,
that is, matrices $\bfL$ and $\bfR$ can be not consistent with $\tilde\bfX^{(m)}$, $m=1,2$.
Therefore, $\rho_{\bfL,\bfR}$ takes into consideration only projections of columns and rows of $\tilde\bfX^{(1)}$ and $\tilde\bfX^{(2)}$
on the column spaces of $\bfL$ and $\bfR$ (Remark~\ref{rem:noagree}). This means that $\rho_{\bfL,\bfR}$ can overestimate
the separability accuracy.

For Oblique SSA, when only one of coordinate systems (left or right) is oblique, the conventional $\bfw$-correlations between series are  more appropriate measures of separability,
since in the case of exact oblique separability we have orthogonal (in the Frobenius inner product) matrix components (Corollary~\ref{col:F_orth}).

Other important measure of proper separability is the closeness of the reconstructed series components
to time series of finite rank.
This can be measured by the contribution of the leading $r_m=|I_m|$ eigentriples into the SVD of
the trajectory matrix $\wtilde\bfX^{(m)}$ of the $m$th reconstructed series component $\wtilde\tX^{(m)}$.
If we denote $\wtilde\lambda_{m,i}$ the eigenvalues of the ordinary SVD of  $\wtilde\bfX^{(m)}$, then
$\tau_{r_m}(\wtilde\tX^{(m)})=1-\sum_{i=1}^{r_m} \wtilde\lambda_{m,i}/\|\wtilde\bfX^{(m)}\|^2$ reflects the closeness of the $m$th series to the series
of rank $r_m$.

\subsection{Nested Oblique SSA}
\label{sec:nested}
Rather than the ordinary SVD, the SVD with respect to non-orthogonal coordinate systems provides approximation in an inappropriate way.
That is why Oblique SSA cannot
be used for extraction of the leading components, in particular, for extraction
of the signal and for denoising.

Therefore, the nested way of using Oblique SSA is suggested. The approach is somewhat similar to factor analysis,
where a factor space can be estimated by principal component analysis and then interpretable factors are extracted from the factor space.

Suppose that Basic SSA can extract the signal but cannot separate the signal components.
For example, let the time series consist of a noisy sum of two sinusoids.
Then Basic SSA can perform denoising but probably cannot separate these sinusoids,
if their frequencies are close.
Thus, Basic SSA is used for estimation of the subspace of the sum of sinusoids
and then some other method can be used to separate the sinusoids themselves.
The choice of parameters for better separation is thoroughly investigated in
\cite{Golyandina2010}.
Note that the nested approach is similar to the refined grouping used in \cite[Section 2.5.4]{Golyandina.Zhigljavsky2012}
for the SSA-ICA algorithm.

Thus, let us apply Basic SSA with proper parameters and let a matrix decomposition $\bfX = \bfX_{I_1} + \ldots + \bfX_{I_p}$ be
obtained at Grouping step of Basic SSA; each group  corresponds
to a separated time series component. Let the $s$th group $I=I_s$ be chosen for a refined decomposition.
Denote  $\bfY=\bfX_{I}$, $r=\rank \bfY$, $\tY=\calT^{-1}\calH \bfY$ the series obtained from $\bfY$ by diagonal averaging.

\begin{algorithm}
\label{alg:nested}
\rm
\textbf{(Nested Oblique SSA.)}

\smallskip\noindent
\emph{Input}: The matrix $\bfY$, matrices  $(\bfL, \bfR)$, which are consistent with $\bfY$ (see Definition~\ref{def:agreed}).

\smallskip\noindent
\emph{Output}: a refined series decomposition $\tY=\wtilde\tY^{(1)} + \ldots + \wtilde\tY^{(l)}$.

\begin{enumerate}
\item
Construct an $(\bfL,\bfR)$-SVD of $\bfY$ by Algorithm~\ref{alg:LRSVD} in the form
\bea
\label{eq:nestedSVD}
\bfY=\sum_{i=1}^r \sigma_i P_i Q_i^\rmT.
\eea
\item
Partition the set $\{1,\ldots,r\}=\bigsqcup_{m=1}^l J_m$ 
and perform grouping to obtain a refined matrix decomposition
$\bfY=\bfY_{J_1} + \ldots + \bfY_{J_l}$.
\item
Obtain a refined series decomposition
$\tY=\wtilde\tY^{(1)} + \ldots + \wtilde\tY^{(l)}$, where
$\wtilde\tY^{(m)}=\calT^{-1}\calH \bfY_{J_m}$.
\end{enumerate}
\end{algorithm}

Thus, after application of Algorithm~\ref{alg:nested} to the group $I_s$, we obtain the following decomposition of the series $\tX$:
\bea
\tX=\wtilde\tX^{(1)}+\ldots+\wtilde\tX^{(p)}, \ \mbox{where}\ \wtilde\tX^{(s)}=\wtilde\tY^{(1)} + \ldots + \wtilde\tY^{(l)}.
\eea

For simplicity, below we will consider the case $l=2$.

\subsection{Iterative O-SSA}

Let us describe an iterative version of Algorithm~\ref{alg:nested}, that is,
an iterative algorithm for obtaininig appropriate matrices $\bfL$ and $\bfR$ for the $(\bfL, \bfR)$-SVD of $\bfX_I$.
For proper use of nested decompositions, we should
expect that the matrix $\bfX_I$ is close to a rank-deficient trajectory matrix of rank $r$.

To explain the main point of the method, assume that $\bfX_I=\bfY$ is the trajectory matrix of $\tY$.
Let $\tY=\tY^{(1)}+\tY^{(2)}$ and the trajectory matrices $\bfY_1$ and $\bfY_2$ be of ranks $r_1$ and $r_2$, $r_1+r_2=r$. Then
by Theorem~\ref{th:sepSVD} there exist $r$-rank separating matrices $\bfL^*$, $\bfR^*$ of sizes $L\times L$ and
$K\times K$ correspondingly and a partition
$\{1,\ldots,r\}=J_1 \sqcup J_2$ such
that we can perform the proper grouping in the $(\bfL^*,\bfR^*)$-SVD and thereby obtain $\bfY_{J_1}=\bfY_1$ and $\bfY_{J_2}=\bfY_2$.

Unfortunately, we do not know $\bfL^*$ and $\bfR^*$, since they are determined
by unknown trajectory spaces of $\tY^{(1)}$ and $\tY^{(2)}$.
Therefore, we want to construct the sequence of $(\bfL,\bfR)$-SVD decompositions \eqref{eq:LRSVD},
which in some sense converges to the separating decomposition.

Let us have an initial $(\bfL^{(0)},\bfR^{(0)})$-SVD decomposition of $\bfY$ and group its
components to obtain some initial estimates $\wtilde\tY^{(1,0)}$ and $\wtilde\tY^{(2,0)}$ of $\tY^{(1)}$ and $\tY^{(2)}$.
Then we can use the trajectory spaces of $\wtilde\tY^{(1,0)}$ and $\wtilde\tY^{(2,0)}$ to construct the new inner product
which is expected to be closer to the separating one. Therefore, we can expect that $\wtilde\tY^{(1,1)}$ and $\wtilde\tY^{(2,1)}$
will be closer to $\tY^{(1)}$ and $\tY^{(2)}$ and therefore we take their trajectory spaces
to construct a new inner product; and so on.
Certainly, if the initial decomposition is strongly separating, then we obtain that
$\wtilde\tY^{(m,1)}=\wtilde\tY^{(m,0)}=\tY^{(m)}$, $m=1,2$.


\subsubsection{Basic algorithm}

We call the iterative version of Algorithm~\ref{alg:nested} (Nested Oblique SSA) as \emph{Iterative Oblique SSA} or \emph{Iterative O-SSA}.

\begin{algorithm}
\label{alg:iter}
\rm
\textbf{(Scheme of Iterative O-SSA.)}

\smallskip\noindent
\emph{Input:}
The matrix $\bfY$ of rank $r$, which is the input matrix for Algorithm~\ref{alg:nested},
a partition $\{1,\ldots,r\}=J_1 \sqcup J_2$, $r_m=|J_m|$, the accuracy $\ve$ and the maximal number of iterations $M$.
Also we should choose a pair of matrices $(\bfL^{(0)},\bfR^{(0)})$, consistent with $\bfY$ as initial
data.

\smallskip
Together with the partition, the matrices provide the decompositions $\bfY= \bfY_{J_1}^{(0)} + \bfY_{J_2}^{(0)}$
and $\tY=\wtilde\tY^{(1,0)} + \wtilde\tY^{(2,0)}$.

\smallskip\noindent
\emph{Output:}
$\tY=\wtilde\tY^{(1)} + \wtilde\tY^{(2)}$.

\begin{enumerate}
\item
$k=1$.
\item
\label{step:LkRk}
Call of Algorithm for calculation of $(\bfL^{(k)},\bfR^{(k)})$ consistent with $\bfY$.
\item
Construct the $(\bfL^{(k)},\bfR^{(k)})$-SVD of $\bfY$:
\begin{equation}
\label{eq:kLRSVD}
  \bfY=\sum_{i=1}^r \sigma_i^{(k)} P_i^{(k)} (Q_i^{(k)})^\rmT = \bfY_{J_1}^{(k)} + \bfY_{J_2}^{(k)}.
\end{equation}
\item
Obtain the decomposition of the series $\tY=\wtilde\tY^{(1,k)} + \wtilde\tY^{(2,k)}$,
where $\wtilde\tY^{(m,k)}=\calT^{-1}\calH \bfY_{J_m}^{(k)}$, $m=1,2$.
\item
If $k\ge M$ or $\max(\|\wtilde\tY^{(m,k)}-\wtilde\tY^{(m,k-1)}\|^2/N, m=1,2)<\varepsilon^2$, then
$\wtilde\tY^{(m)} \leftarrow \wtilde\tY^{(m,k)}$, $m=1,2$,
and STOP;
else $k \leftarrow k+1$ and go to step 2.
\end{enumerate}
\end{algorithm}

\begin{remark}
\label{rem:L0R0}
Note that the initial matrices $(\bfL^{(0)},\bfR^{(0)})$ can be chosen such that the initial decomposition (\ref{eq:kLRSVD}) for $k=0$ is a part
of the SVD \eqref{eq:elem_matr} and thereby coincides with the ordinary SVD of $\bfY$, that is, $\bfL^{(0)}$ and $\bfR^{(0)}$ are the identity matrices.
Then the partition can be performed as follows. In the decomposition \eqref{eq:elem_matr},
we can choose two sets of eigentriple numbers and consider their union as $I$. The chosen sets of
numbers automatically generate the partition $J_1\sqcup J_2$. For example, if two groups,
ET2,8 and ET3--6, are chosen, then $I=\{2,3,4,5,6,8\}$, $r=6$, $J_1=\{1,6\}$, $J_2=\{2,3,4,5\}$.
\end{remark}

To finalize Algorithm~\ref{alg:iter}, we should present the algorithm for step~\ref{step:LkRk}.
Define
$\Pi_\mathrm{col}$ the orthogonal projection operator (in the ordinary sense) on the column space
of $\bfY$, $\Pi_\mathrm{row}$ the projection operator on the row space of $\bfY$.

\begin{algorithm}
\label{alg:LRcalc}
\rm
\textbf{(Calculation of $(\bfL^{(k)},\bfR^{(k)})$.)}

\smallskip\noindent
\emph{Input}:
The partition $\{1,\ldots,r\}=J_1 \sqcup J_2$, $r_m=|J_m|$,
the pair of matrices $(\bfL^{(k-1)},\bfR^{(k-1)})$.

\smallskip\noindent
\emph{Output}:
The pair of matrices $(\bfL^{(k)},\bfR^{(k)})$.

\begin{enumerate}
\item
\label{it:hankel}
Calculate $\wtilde\bfY_m=\calH \bfY_{J_m}^{(k-1)}$, $m=1,2$.
\item
\label{it:2SVD}
Construct the ordinary SVDs:
\bea
\wtilde\bfY_m=\sum_{i=1}^{d_m} \sqrt{\lambda_i^{(m)}} U_i^{(m)} (V_i^{(m)})^\rmT,\ m=1,2,
\eea
(we need the first $r_m$ terms only).
\item
\label{it:whatU}
Find the projections $\what{U}_i^{(m)} = \Pi_\mathrm{col} U_i^{(m)}$ and $\what{V}_i^{(m)} = \Pi_\mathrm{row} V_i^{(m)}$ for $i=1,\ldots, r_m$,  $m=1,2$.
Denote
\bea
\what{\bfU}^{(m)}=[\what{U}_1^{(m)}:\ldots:\what{U}_{r_m}^{(m)}], \ \what{\bfV}^{(m)}=[\what{V}_1^{(m)}:\ldots:\what{V}_{r_m}^{(m)}].
\eea
\item
\label{it:laststep}
Calculate $\bfL^{(k)}=(\what\bfU^\dag)^\rmT \what\bfU^\dag$ and
$\bfR^{(k)}=(\what\bfV^\dag)^\rmT \what\bfV^\dag$, where
$\what\bfU=[\what{\bfU}^{(1)}:\what{\bfU}^{(2)}]$  and $\what\bfV=[\what{\bfV}^{(1)}:\what{\bfV}^{(2)}]$.
\end{enumerate}
\end{algorithm}


Note that we assume that the matrices $\what{\bfU}^{(m)}$ and $\what{\bfV}^{(m)}$ obtained at step~\ref{it:whatU} are
of full rank; otherwise, the algorithm does not work.

For the constructed iterative Algorithm~\ref{alg:iter}, the convergence of $\wtilde\tY^{(1,k)}$ and $\wtilde\tY^{(2,k)}$
to some  series $\tY^{(1)}$ and $\tY^{(2)}$ is not proved theoretically;
however, numerical experiments confirm the convergence for the most of reasonable examples.

Let us shortly discuss why one can expect the convergence of the iterations to the proper decomposition.
First, note that Iterative O-SSA does not change the separating decomposition, that is, the separating decomposition
is a fixed point of the algorithm.
Then, the separating decomposition $\bfY = \bfY_1 + \bfY_2$ should satisfy the following properties:\\
(1) $\bfY_1$ and $\bfY_2$ are Hankel;\\ (2) $\rank \bfY_1 =r_1$, $\rank\bfY_2 =r_2$; \\(3) the column and row spaces of $\bfY_1$ and $\bfY_2$ lie in the column and row spaces of $\bfY$;\\ (4) $\bfY_1$ and $\bfY_2$ are $(\bfL,\bfR)$ bi-orthogonal for $\bfL=\bfL^*$ and $\bfR=\bfR^*$.

Each iteration consequently tries to meet these properties:\\
(1) hankelization at step~\ref{it:hankel} is the orthogonal projection on the set
of Hankel matrices;\\
(2) taking the $r_m$ leading components in the SVDs of series (step~\ref{it:2SVD}) performs the low-rank projections;\\
(3) there is the step~\ref{it:whatU} of projection on the row and column spaces of $\bfY$;\\
(4) the choice of $(\bfL,\bfR)$-inner products at step~\ref{it:laststep} makes the matrices bi-orthogonal.

\subsubsection{Modification with sigma-correction}
If the initial point for iterations is not far from the separating pair $(\bfL^*$, $\bfR^*)$,
we can expect that the convergence will take place, since we are close to the fixed-point value
and we can expect that $\sigma_i^{(k)}$ are changed slightly.
However, in general, a possible reordering of the decomposition components
between iterations of Iterative O-SSA can interfere convergence.
The case of $J_1=\{1,\ldots,r_1\}$, when the minimal singular value $\sigma_{r_1}$ of the first series is kept essentially larger than the maximal singular value $\sigma_{r_1+1}$ of the second series,
would provide safety.

Let us describe the modification of Iterative O-SSA that provides reordering of the components, moves them apart
and thereby relaxes the problem of mixing of components.
Modification consists in an adjustment of calculation of $\what{\bfU}^{(2)}$ and $\what{\bfV}^{(2)}$\ at step \ref{it:whatU}
of Algorithm~\ref{alg:LRcalc}.

\begin{algorithm}
\label{alg:sigmacor}
\rm
\textbf{(Modification of Algorithm~\ref{alg:LRcalc}.)}

\smallskip\noindent
\emph{Input} and \emph{Output} are the same as in Algorithm~\ref{alg:LRcalc} except for
an additional parameter $\varkappa>1$ called the separating factor.

\smallskip\noindent
The algorithm is the same except for an additional step 3a after step 3.

\smallskip\noindent
3a: If $\lambda_{r_1}^{(1)} < \varkappa^2 \lambda_1^{(2)}$ at step \ref{it:2SVD} of Algorithm~\ref{alg:LRcalc}, then
define $\mu = \varkappa \sqrt{\lambda_1^{(2)} / \lambda_{r_1}^{(1)}}$ and change
$\what{\bfU}^{(2)} \leftarrow \sqrt{\mu}\what{\bfU}^{(2)}$, $\what{\bfV}^{(2)} \leftarrow \sqrt{\mu}\what{\bfV}^{(2)}$.
Due to reordering, put $J_1=\{1,\ldots,r_1\}$, $J_2=\{r_1+1,\ldots,r\}$.
\end{algorithm}

Note that the adjustment implicitly leads to the change of the order of matrix components in \eqref{eq:kLRSVD},
since they are ordered by $\sigma^{(k)}_i$. Thereby
we force an increase of the matrix components related to the first series component.
Proposition~\ref{prop:reord} explains this adjustment.


\begin{remark}
The reordering procedure is fulfilled by sequential adjustment of
the component weights and therefore depends on the component enumeration.
\end{remark}

Note that the described correction can help to provide the strong separability
if the weak one takes place.

\subsection{Separability of sine waves with close frequencies}
\label{sec:close_freq}
\subsubsection{Noise-free cases}
Let us consider the sum of two sinusoids $x_n=\sin(2\pi \omega_1 n)+ A\sin(2\pi \omega_2 n)$, $n=1,\ldots, N$, $N=150$,
with close frequencies $\omega_1 = 0.065$ and $\omega_2 = 0.06$ and unequal amplitudes, 1 and $A=1.2$. Let the window length $L=70$.
Since sinusoids with such close frequencies are far from being orthogonal for the considered window and series lengths,
Basic SSA cannot separate them, see Fig.~\ref{fig:assa1} (top) where the result of the Basic SSA decomposition is depicted.

To separate the sinusoids we apply the Iterative O-SSA algorithm (Algorithm~\ref{alg:iter}) with no sigma-correction,
$\ve=10^{-5}$ and two groups ET1--2 and ET3--4. The maximal number $M$ of iterations was taken very large and therefore
was not reached.
Decomposition after Iterative O-SSA is depicted in Fig.~\ref{fig:assa1} (bottom).

\begin{figure}[!htb]
    \centering
    \includegraphics{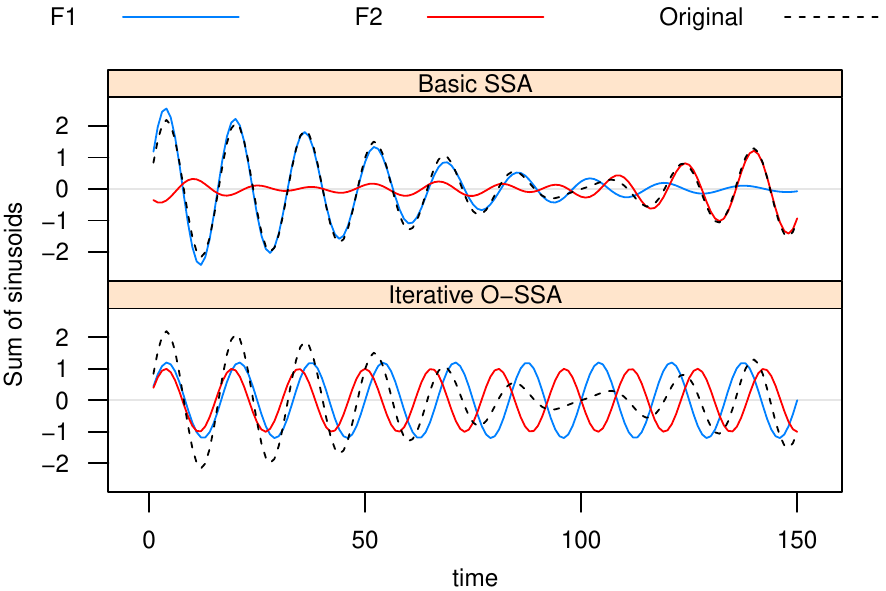}
    \caption{Sum of two sinusoids with close frequencies: decomposition by Basic SSA and Iterative O-SSA}
    \label{fig:assa1}
\end{figure}

Let us apply the measures of separability described in Section~\ref{sec:LRsepar}.
Note that the conventional $\bfw$-correlations do not reflect the quality of decomposition.
For the initial decomposition we have $0.08$. After Iterative O-SSA the $\bfw$-correlation becomes to be equal to $-0.44$, while
$(\bfL,\bfR)$ $\bfw$-correlation is almost 0. The last result confirms that
the method separates harmonics exactly.
Other measure of true decomposition is the closeness of the components to series of finite ranks.
Since the ranks should be equal to the number of the components in the chosen groups, we can calculate
the proportion of the corresponding number of the leading components in their SVD decompositions.
The mean proportion ($0.5(\tau_{r_1}(\tX^{(1)})+\tau_{r_2}(\tX^{(2)})$) is changed from $0.06$ to almost 0.

Let us fix $\omega_2=0.06$. Then for $\omega_1=0.065$ the algorithm stops after 113 iterations,
for $\omega_1=0.07$ the number of iterations is equal to 26, for $\omega_1=0.08$ it is equal to just 6;
see blue line in Fig.~\ref{fig:iter_err} (top).

Note that we do not need to use the sigma-correction, since
the sinusoids have different amplitudes.

If we consider equal amplitudes with $A=1$ and take $\varkappa=2$ (Algorithm~\ref{alg:sigmacor}),
 then Iterative O-SSA still converges even for $\omega_2=0.065$ (191 iterations) to the true solution.

\subsubsection{Nested separability in presence of noise}
Let us add noise to the sum of two sinusoids and take
$x_n=\sin(2\pi \omega_1 n)+ A\sin(2\pi \omega_2 n) + \delta \ve_n$
with close frequencies $\omega_1 = 0.07$ and $\omega_2 = 0.06$ and unequal amplitudes, 1 and $A=1.2$.
Here $\ve_n$ is white Gaussian noise with variance 1, $\delta=1$. Let again $N=150$, $L=70$.

\begin{figure}[!htb]
    \centering
    \includegraphics{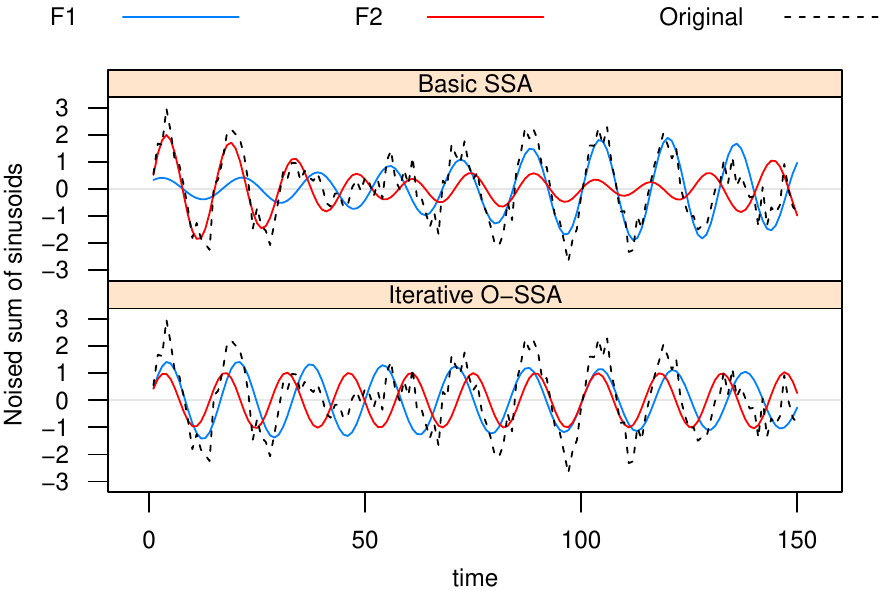}
    \caption{Noisy sum of two sinusoids with close frequencies: decomposition by Basic SSA and Iterative O-SSA}
    \label{fig:assa2}
\end{figure}

Basic SSA well separates the sinusoids from noise, but cannot separate these sinusoids themselves.
Thus, Iterative O-SSA applied to the estimated signal subspace should be used.
We use the sigma-correction with $\varkappa=2$, since the difference between amplitudes, 1 and 1.2,
appears to be small for strong separability in presence of noise. As before, we set the initial grouping
ET1--2 and ET3--4.

\begin{figure}[!htb]
    \centering
    \includegraphics{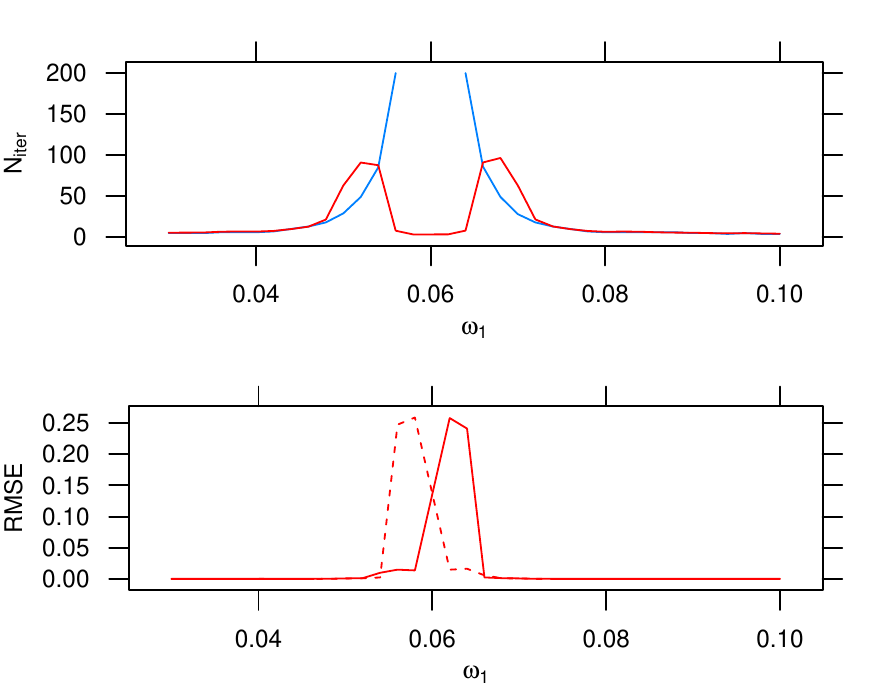}
    \caption{Dependence of number of iterations (top) and RMSE errors of frequency estimations (bottom) on $\omega_1$ for $\omega_2=0.6$}
    \label{fig:iter_err}
\end{figure}

The decomposition by Basic SSA at top and by Iterative O-SSA at bottom is depicted in Fig.~\ref{fig:assa2}.
The number of iterations is equal to 32, what is just slightly larger than 26 in the noiseless case.

Let us investigate the dependence of number of iterations on $\omega_1$ with the fixed $\omega_2=0.06$.
We change $\omega_1$ from 0.03 to 0.059 and from 0.061 to 0.1.
Fig.~\ref{fig:iter_err} (top) shows the number of iterations for noiseless signal
(blue line) and the estimated mean number of iterations for the noisy signal (red line);
the number of repetitions equals 1000, 5\% winsorized estimates of means were calculated.
Note that the number of iterations was limited by 200, although
for the pure signal convergence held for each $\omega_1$ from the considered set.
A surprisingly small number of iterations for the noisy signal
and close frequencies is explained by convergence to an wrong limit, see Fig.~\ref{fig:iter_err} (bottom) with root mean square errors
of LS-ESPRIT estimates for $\omega_1$ and $\omega_2$ based on the subspaces spanned by eigenvectors from ET1--2 and ET3--4
(see, e.g., \cite{Roy.Kailath1989} or \cite[Section 2.8.2]{Golyandina.Zhigljavsky2012} for the ESPRIT algorithms).
Since we use the nested decomposition, the noise slightly influences the reconstruction accuracy for frequencies
that are quite different ($\omega_1$ smaller than 0.048 and larger than 0.072).

\section{SSA with derivatives. Variation for strong separability}
\label{sec:DerivSSA}
In this section we describe a variation of SSA that helps to overcome the problem of lack
of strong separability if weak separability holds.

Recall that the lack of strong separability of two series components is caused by equal singular values in the sets of the singular values generated by each of time series.
In turn, the singular values depends on coefficients $A_1$ and $A_2$ before the series components in the sum $A_1 s_n^{(1)} + A_2 s_n^{(2)}$.
The question is how to change the coefficients $A_1$ and $A_2$ in conditions of unknown $s_n^{(1)}$ and $s_n^{(2)}$
to make the singular values different.

It seems that the most natural approach is to use the derivative of the time series in order to change the coefficients
and not to change the component subspaces.
For example, if $x_n = A \sin(2 \pi \omega n + \phi)$, then $x'_n =  2\pi \omega A\cos(2 \pi \omega n +\phi)$,
that is, the coefficient $A'=2\pi \omega A$.
If we take two sinusoids with different frequencies, then derivation changes their amplitudes
differently.
For $x_n= A e^{\alpha n}$, derivation also changes the coefficient before the exponential, since $x'_n=\alpha A e^{\alpha n}$,
and preserves the rate.
For the most of series of finite rank, the derivative subspace coincides with the series subspace.
The exception is polynomial series, when the derivative subspace is a subset
of the initial subspace.

Certainly, since we deal with discrete time,
we consider $\varphi_n(\tX)=x_{n+1}-x_{n}$ instead of derivative. However, the approach of taking differences works.
For example, for series $\tX=\tX_N$ of length $N$ with $x_n = A \sin(2 \pi \omega n + \phi)$, we obtain the series
$\Phi_{N-1}(\tX)=(\varphi_1(\tX),\ldots,\varphi_{N-1}(\tX))$ of length $N-1$ with
$\varphi_n(\tX) =  2 \sin(\pi \omega) A\cos(2 \pi \omega n + \pi \omega +\phi)$;
for $x_n = A e^{\alpha n}$, we obtain $\varphi_n(\tX)= (e^{\alpha} -1) A e^{\alpha n}$.


Thus, we can combine the initial series and
its derivative to imbalance the component contribution and therefore to obtain their strong separability.
For sinusoids, the smaller the period, the larger the increase of the
sinusoid amplitude. Therefore, derivation increases the contribution of
high frequencies. This effect can increase the level of the noise component, if the series is corrupted by noise.
Hence, the nested version of the method implementation should be produced; in particular, the noise component should be
removed by Basic SSA in advance.

\begin{remark}
The approach involving derivatives (that is, sequential differences) can be naturally extended to considering
an arbitrary linear filtration $\varphi$ instead of taking sequential differences. It this paper we deal with derivatives, since
this particular case is simple and has very useful applications.
\end{remark}

In Section~\ref{sec:FSSA_MSSA} we consider the initial series and its derivative together as two series,
regulating the contribution of the derivative, and apply then the multivariate version of SSA.
Section~\ref{sec:FSSA_DSSA} transforms this approach to a special nested version of Oblique SSA called DerivSSA.

\subsection{SSA with derivatives as MSSA}
\label{sec:FSSA_MSSA}
Let us consider the system of two time series $(\tX_N,\gamma \Phi_{N-1}(\tX))$
and apply Multivariate SSA (MSSA).

The MSSA algorithm can be found, for example, in \cite{Elsner.Tsonis1996, Golyandina.Stepanov2005} for time series of equal lengths.
However, it is naturally extended to different lengths.
In particular, MSSA for time series of different lengths is described in \cite[Section III.2]{Danilov.Zhigljavsky1997} and
\cite{Golyandina.etal2013}.

In MSSA, the embedding operator $\calT$ transfers two time series
$(\tX_{N_1},\tY_{N_2})$ to the stacked $L$-trajectory matrix $[\bfX:\bfY]$.
That is, the only difference with Basic SSA consists in the construction of the embedding operator $\calT$.

Let $\tX_N=\tX_N^{(1)}+\tX_N^{(2)}$ and $\tX_N^{(1)}$ and $\tX_N^{(2)}$ be of finite rank and approximately separable.
Therefore their row and column trajectory spaces are approximately orthogonal.
Then the same is valid for $\Phi_{N-1}(\tX^{(1)})$ and $\Phi_{N-1}(\tX^{(2)})$ in view of the fact that
their column spaces belongs to the column spaces of $\tX_N^{(1)}$ and $\tX_N^{(2)}$, while
their row spaces are spanned by vectors of the same structure that the vectors constituting bases of the row spaces of $\tX_N^{(1)}$ and $\tX_N^{(2)}$,
except for these basis vectors has length $K-1$ instead of $K$. Therefore, approximate orthogonality still hold.
Since $\Phi_{N-1}(\tX)=\Phi_{N-1}(\tX^{(1)})+\Phi_{N-1}(\tX^{(2)})$,
MSSA applied to $(\tX_N,\gamma \Phi_{N-1}(\tX))$ will approximately separate the time series $\tX_N^{(1)}$ and $\tX_N^{(2)}$.
Certainly, we will not have exact separability; however, it is not so important for practice.

As it was mentioned before, a drawback of the described approach is that the method cannot be  applied to noisy series, since
it intensifies high-frequency harmonics and therefore strengthens noise.
Therefore, denoising should be applied as preprocessing.
Also, SSA involving derivatives changes component contributions (this is what we want) but
simultaneously the method loses approximation features.
These reasons lead to the necessity to use the nested way of decomposition introduced in Section~\ref{sec:nested}.

\subsection{Nested SSA with derivatives (DerivSSA)}
\label{sec:FSSA_DSSA}
Let us formulate the nested version of SSA with derivatives called DerivSSA.
As well as in Section~\ref{sec:nested}, let $\bfY=\bfX_{I}$ be one of matrices in the decomposition
$\bfX = \bfX_{I_1} + \ldots + \bfX_{I_p}$ obtained at Grouping step of Basic SSA; each group  corresponds
to a separated time series component and we want to construct a refined decomposition of $\bfY$.
As before, denote  $r=\rank \bfY$, $\tY=\calT^{-1}\calH \bfY$.

\begin{algorithm}
\label{alg:deriv}
\rm
\textbf{(DerivSSA.)}

\smallskip\noindent
\emph{Input}: The matrix $\bfY$, the weight of derivative $\gamma>0$.

\smallskip\noindent
\emph{Output}: a refined series decomposition $\tY=\wtilde\tY^{(1)} + \ldots + \wtilde\tY^{(l)}$.

\begin{enumerate}
\item
Denote $\Phi(\bfY)=[Y_2-Y_1:\ldots:Y_K-Y_{K-1}]$.
Construct the matrix $\bfZ=[\bfY:\gamma\Phi(\bfY)]$.
\item
Perform the SVD of $\bfZ$: $\bfZ=\sum_{i=1}^r \sqrt{\lambda_i} U_i V_i^\rmT$.
\item
Construct the following decomposition of $\bfY=\bfX_I$ into the sum
of elementary matrices: $\bfY=\sum_{i=1}^r U_i U_i^\rmT \bfY$.
\item
Partition the set $\{1,\ldots,r\}=\bigsqcup_{m=1}^l J_m$ 
and perform grouping to obtain a refined matrix decomposition
$\bfY=\bfY_{J_1} + \ldots + \bfY_{J_l}$.
\item
Obtain a refined series decomposition
$\tY=\wtilde\tY^{(1)} + \ldots + \wtilde\tY^{(l)}$, where
$\wtilde\tY^{(m)}=\calT^{-1}\calH \bfY_{J_m}$.
\end{enumerate}
\end{algorithm}

Note that steps 2 and 3 of algorithm are correct, since the column space of $\bfZ$ coincides with
the column space of $\bfY$. Therefore, $\rank \bfZ=r$ and $\{U_i\}_{i=1}^r$ is the orthonormal basis of
the column space of $\bfY$.

The following proposition shows that Algorithm~\ref{alg:deriv} is exactly Algorithm~\ref{alg:nested} with a specific pair of matrices
$(\bfL,\bfR)$, where
$P_i=U_i$, $Q_i$ are normalized vectors $\bfY^\rmT U_i$ in \eqref{eq:nestedSVD}.

\begin{proposition}
\label{prop: DSSA_OSSA}
  The left singular vectors of the ordinary SVD of $\bfZ$ coincide with
  the left singular vectors of the ($\bfL$,$\bfR$)-SVD of the input matrix  $\bfY$,
  where $\bfL\in \cM_{L,L}$  is the identity matrix and $\bfR$ is defined by the equality
  $\bfR = \bfE + \gamma^2 \bfF^\rmT \bfF$, where $\bfE\in \cM_{K,K}$ is the identity matrix and
  \begin{gather*}
  \bfF = \left(\begin{array}{cccccc}
      -1 & 1 & 0 & 0 & \cdots & 0 \\
      0 & -1 & 1 & 0 & \cdots & 0 \\
      \vdots & \ddots & \ddots & \ddots & \ddots & \vdots \\
      0 & \cdots & 0 & -1 & 1 & 0\\
      0 & \cdots & 0 & 0 & -1 & 1\\
    \end{array}\right)
    \in \cM_{K-1,K}.
\end{gather*}
\end{proposition}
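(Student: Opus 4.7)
The plan is to reduce both decompositions to eigenproblems of the same symmetric matrix and read off the claim. The key observation is that the discrete derivative operation $\Phi$ is right multiplication by $\bfF^\rmT$: by inspection of how the $j$th column of $\bfY\bfF^\rmT$ is formed from the $j$th row of $\bfF$, one has $\Phi(\bfY)=\bfY\bfF^\rmT$. Consequently
\begin{equation*}
\bfZ=[\bfY:\gamma\bfY\bfF^\rmT]=\bfY\bigl[\bfE:\gamma\bfF^\rmT\bigr].
\end{equation*}

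Next I would compute the Gram matrix that governs the left singular structure on both sides. On the DerivSSA side,
\begin{equation*}
\bfZ\bfZ^\rmT=\bfY\bigl[\bfE:\gamma\bfF^\rmT\bigr]\bigl[\bfE:\gamma\bfF^\rmT\bigr]^\rmT\bfY^\rmT
=\bfY\bigl(\bfE+\gamma^2\bfF^\rmT\bfF\bigr)\bfY^\rmT=\bfY\bfR\bfY^\rmT,
\end{equation*}
so the left singular vectors $U_i$ of the ordinary SVD of $\bfZ$ are exactly the eigenvectors of $\bfY\bfR\bfY^\rmT$. On the Oblique side, note first that $\bfR=\bfE+\gamma^2\bfF^\rmT\bfF$ is strictly positive definite for any $\gamma\ge 0$, so it has a genuine square root $\bfO_\bfR$ with $\bfO_\bfR^\rmT\bfO_\bfR=\bfR$, and consistency with $\bfY$ is automatic because $\bfR$ has full rank $K$. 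By Algorithm~\ref{alg:LRSVD}, taking $\bfL=\bfE_L$ (so $\bfO_\bfL=\bfE_L$ and $\bfO_\bfL^\dag=\bfE_L$), the left singular vectors of the $(\bfL,\bfR)$-SVD are $P_i=U_i'$, where $U_i'$ are the left singular vectors of $\bfY\bfO_\bfR^\rmT$. These in turn are eigenvectors of
\begin{equation*}
(\bfY\bfO_\bfR^\rmT)(\bfY\bfO_\bfR^\rmT)^\rmT=\bfY\bfO_\bfR^\rmT\bfO_\bfR\bfY^\rmT=\bfY\bfR\bfY^\rmT.
\end{equation*}

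Thus both the $U_i$ and the $P_i$ are orthonormal eigenvectors of the same symmetric matrix $\bfY\bfR\bfY^\rmT$, taken in the order of decreasing eigenvalues. They therefore coincide (up to the standard sign/basis ambiguity in eigenspaces of dimension greater than one, which is the same ambiguity present in the ordinary SVD).

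The only subtlety worth emphasizing in the write-up is the identity $\Phi(\bfY)=\bfY\bfF^\rmT$, because the rest is a direct algebraic identification of Gram matrices. Nothing delicate happens with ranks, pseudo-inverses or consistency, since $\bfR$ is full rank and $\bfL$ is the identity; the positive-definiteness of $\bfR$ guarantees the existence of $\bfO_\bfR$ used by Algorithm~\ref{alg:LRSVD}, so the argument goes through without any extra hypothesis.
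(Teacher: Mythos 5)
Your proof is correct and rests on the same key identity as the paper's: the standard Gram structure of $\bfZ$ equals the $\bfR$-weighted Gram structure of $\bfY$, i.e.\ $[\bfE:\gamma\bfF^\rmT][\bfE:\gamma\bfF^\rmT]^\rmT=\bfE+\gamma^2\bfF^\rmT\bfF=\bfR$, which the paper phrases as the row-space inner-product identity $(Z_1,Z_2)_{2K-1}=(Q_1,Q_2)_K+\gamma^2(\Phi(Q_1),\Phi(Q_2))_{K-1}=\langle Q_1,Q_2\rangle_\bfR$. Your write-up simply makes the paper's terse ``can be reduced to'' explicit by identifying both families of left vectors as the orthonormal eigenvectors of $\bfY\bfR\bfY^\rmT$, with the positive-definiteness of $\bfR$ handling consistency; this is a faithful, slightly more detailed rendering of the same argument.
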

\begin{proof}
 Note that the standard inner product in the row space of $\bfZ$
 can be expressed as $( Z_1, Z_2 )_{2K-1} = (Q_1,Q_2)_K + \gamma^2(\Phi(Q_1),\Phi(Q_2))_{K-1}$,
 where $Q_1$ and $Q_2$ consist of the first $K$ components of $Z_1$ and $Z_2$,
 $\Phi(Q)\in \spaceR^{K-1}$ applied to a vector $Q=(q_1,\ldots,q_K)^\rmT$
  consists of successive differences of vector components $q_{i+1}-q_i$.
 Thus, if we introduce the inner product $\langle Q_1, Q_2 \rangle_\bfR= (\bfR Q_1,Q_2)_K$, then
 the ordinary SVD of $\bfZ$ can be reduced to the ($\bfL$,$\bfR$)-SVD of $\bfY$ with
 the corresponding matrices $\bfL$ and $\bfR$.
\end{proof}

\begin{remark}
\label{prop:MSSA_DSSA}
If $\bfY$ is the trajectory  matrix of a series $\tY_N$, then the nested SSA with derivatives is equivalent
to the MSSA implementation described in Section~\ref{sec:FSSA_MSSA}. Indeed, the trajectory matrix of the derivative time series $\Phi_{N-1}(\tY)$
coincides with the matrix $\Phi(\bfY)$. Although, if $\bfY$ is not Hankel, there is no MSSA analogue.
\end{remark}


\subsection{Separation of sine waves with equal amplitudes}
\label{sec:close_ampl}
Consider the series
$x_n=\sin(2\pi n/10)+\sin(2\pi n/15)$, $n=1,\ldots, N$, $N=150$, $L=70$,
which is depicted in Fig.~\ref{fig:fssa1}.

\begin{figure}[!htb]
    \centering
    \includegraphics{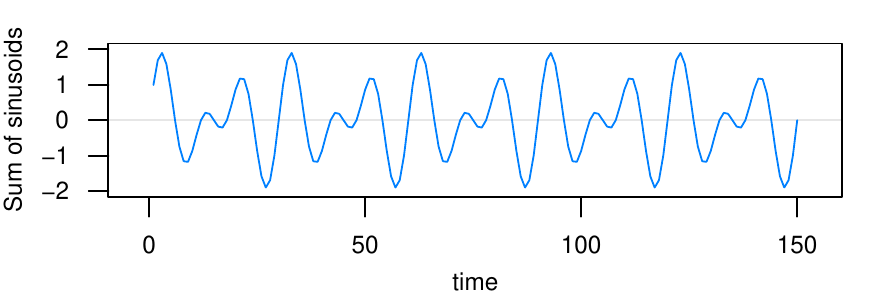}
    \caption{Sum of two sinusoids with equal amplitudes}
    \label{fig:fssa1}
\end{figure}

Sinusoids with periods 10 and 15 are approximately separable for such series and window lengths.
However, since the sinusoid amplitudes are equal, there is no strong separability and therefore
after Basic SSA we obtain an unsatisfactory decomposition, an arbitrary mixture of the sinusoids
(top picture of Fig.~\ref{fig:fssa2}) with $\bfw$-correlation between reconstructed by ET1--2 and ET3--4 series equal to 0.92.

The decomposition performed by DerivSSA with $\gamma=10$ applied to the group ET1--4 with $J_1=\{1,2\}$ and $J_2=\{3,4\}$ (Algorithm~\ref{alg:deriv}) is depicted in the bottom graph of Fig.~\ref{fig:fssa2} and demonstrates the very accurate separability,
$\bfw$-correlation is equal to 0.01. The second measure, the mean proportion $0.5(\tau_{r_1}(\tX^{(1)})+\tau_{r_2}(\tX^{(2)})$, is diminished from
0.3266 to 0.0003.
For this example, the obtained decomposition practically does not depend on $\gamma$ for all $\gamma>2$.

\begin{figure}[!htb]
    \centering
    \includegraphics{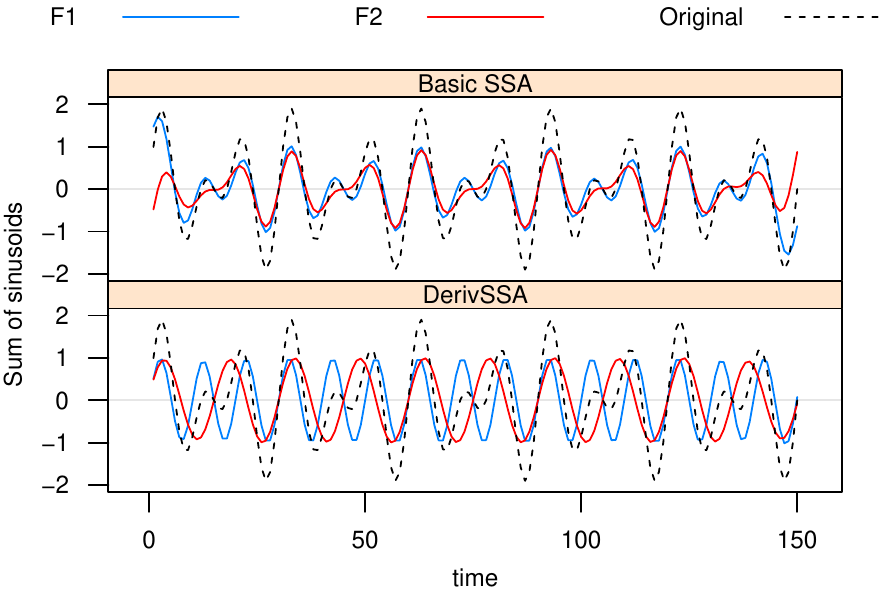}
    \caption{Sum of two sinusoids with equal amplitudes: reconstruction by Basic SSA (top) and DerivSSA (bottom)}
    \label{fig:fssa2}
\end{figure}

\section{Real-life time series}
\label{sec:examples}
In this section we apply Iterative O-SSA (Algorithm~\ref{alg:iter} and \ref{alg:LRcalc}
with possible modification provided by Algorithm~\ref{alg:sigmacor})
and DerivSSA (Algorithm~\ref{alg:deriv}) to real-life time series.
The role of the methods for separability of sine-waves was demonstrated in Sections
\ref{sec:close_freq} and \ref{sec:close_ampl} with the help of simulated data.
The obtained conclusions are generally valid for real-life series: DerivSSA adds to Basic SSA the ability to separate sine waves with close
amplitudes, while Iterative O-SSA can help in separation of sine waves, which are not orthogonal, that is, their
frequencies are insufficiently far one from another. Note that since in real-life series with seasonality there are no
close frequencies, DerivSSA can be very useful for seasonality decomposition.

In this section we consider the problem of trend extraction. The choice of examples is explained by the following considerations.

If a time series is long enough, then the oscillations are well weakly separated from the trend and only
strong separability is under question. Therefore, we expect that DerivSSA will work for trends of complex forms.

For short series, the trend can be not orthogonal to a periodic component like seasonality;
therefore, DerivSSA can even worsen the separability; moreover, derivation suppresses low-frequency components.
On the other hand, Iterative O-SSA is specially designed to separate non-orthogonal series components.

We will take only one iteration in Iterative O-SSA method, since it is sufficient to obtain good decomposition in the
considered examples and also makes the methods comparable by computational cost.

\subsection{Improving of strong separability}
Let us consider US Unemployment data (monthly, 1948-1981, thousands)
for male (20 years and over). Data are taken from \cite{Andrews.Herzberg1985},
the series length $N$ is equal to 408, see Fig.\ref{fig:usun_series}.
Since the series is long, we can expect weak separability of the trend and the seasonality.
For better weak separability we choose the window length equal to $L=N/2=204$, which is divisible by 12.

\begin{figure}[!htb]
    \centering
    \includegraphics{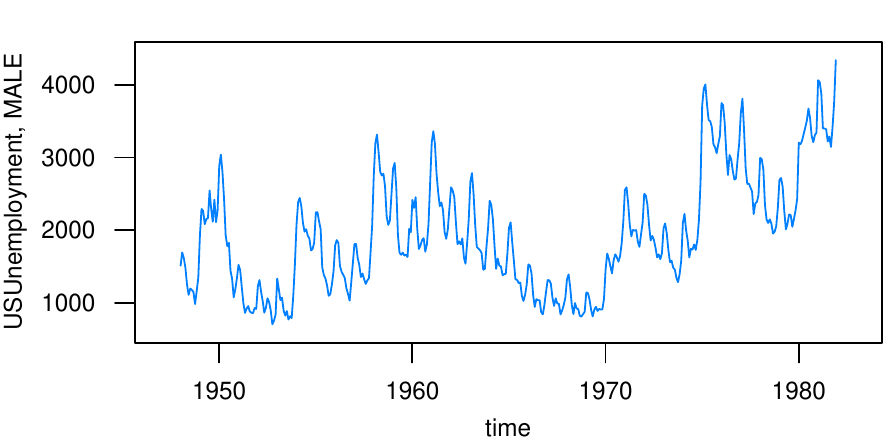}
    \caption{US unemployment: initial series}
    \label{fig:usun_series}
\end{figure}

\begin{figure}[!htb]
    \centering
    \includegraphics{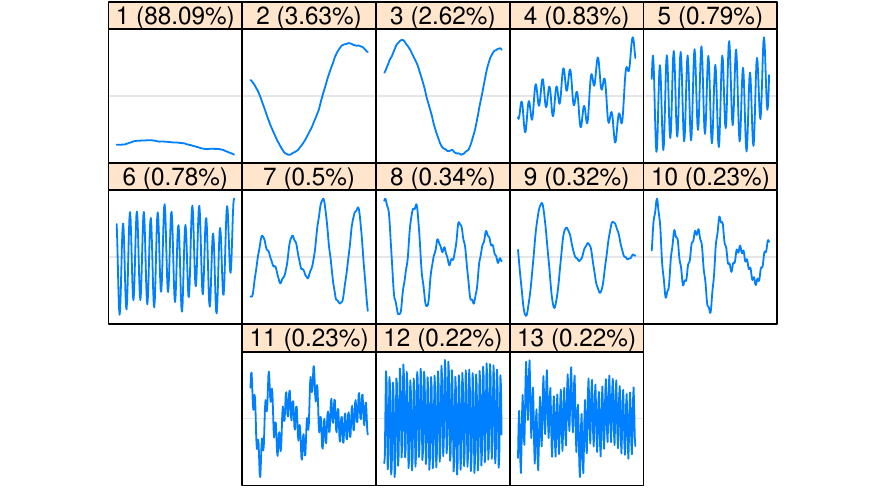}
    \caption{US unemployment: eigenvectors obtained by Basic SSA}
    \label{fig:usun_evect}
\end{figure}
\begin{figure}[!htb]
    \centering
    \includegraphics{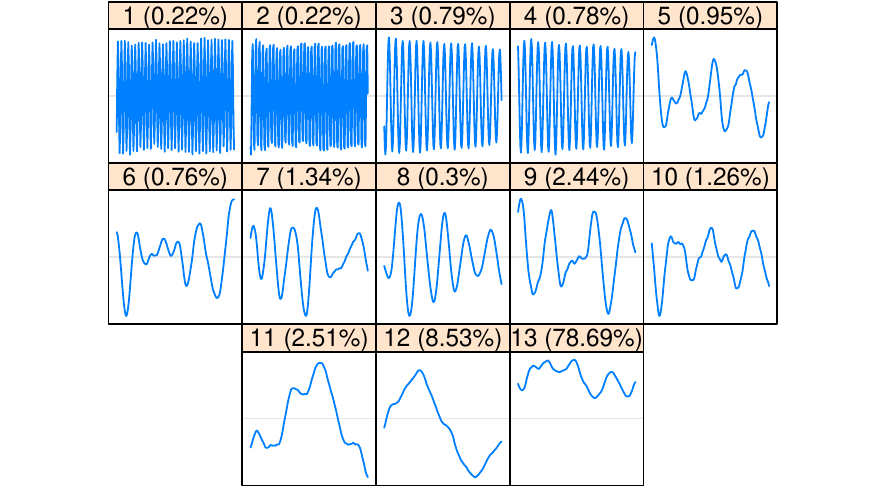}
    \caption{US unemployment: eigenvectors obtained by DerivSSA}
    \label{fig:usun_evect_fssa}
\end{figure}

Basic SSA does not separate the trend and seasonality (see Fig.~\ref{fig:usun_evect} and Fig.~\ref{fig:assa_empl1b} (left))
for this time series, likely due to lack of strong separability.
This is the typical situation when the trend has a complex form, trend components are mixed with the seasonality components
 and therefore the so called Sequential SSA was recommended \cite[Section 1.7.3]{Golyandina.etal2001}.
However, this is the case when DerivSSA should help.

We apply DerivSSA to the group ET1--13 that can be related to the signal.  DerivSSA separates different frequencies so
that components with higher frequencies become leading ones.
Since the low-frequency components in the considered series have large contribution, the weight of derivatives should be
large to make the seasonal components leading; we take $\gamma=1000$.

The resulting eigenvectors are depicted in Fig.~\ref{fig:usun_evect_fssa}.
One can see that the first 4 components contain seasonality, while the eigenvectors 5--13 contains components
of the trend. The mixture of the components within the trend group is not important.
Fig.~\ref{fig:usun_evect_fssa} demonstrates that the seasonal components are now separated from the residual.
Fig.~\ref{fig:assa_empl} depicting the DerivSSA reconstructions of the trend and the seasonality confirms that DerivSSA visibly improves the reconstruction
accuracy, especially at the ends of the series.

\begin{figure*}[!htb]
    \centering
    \includegraphics{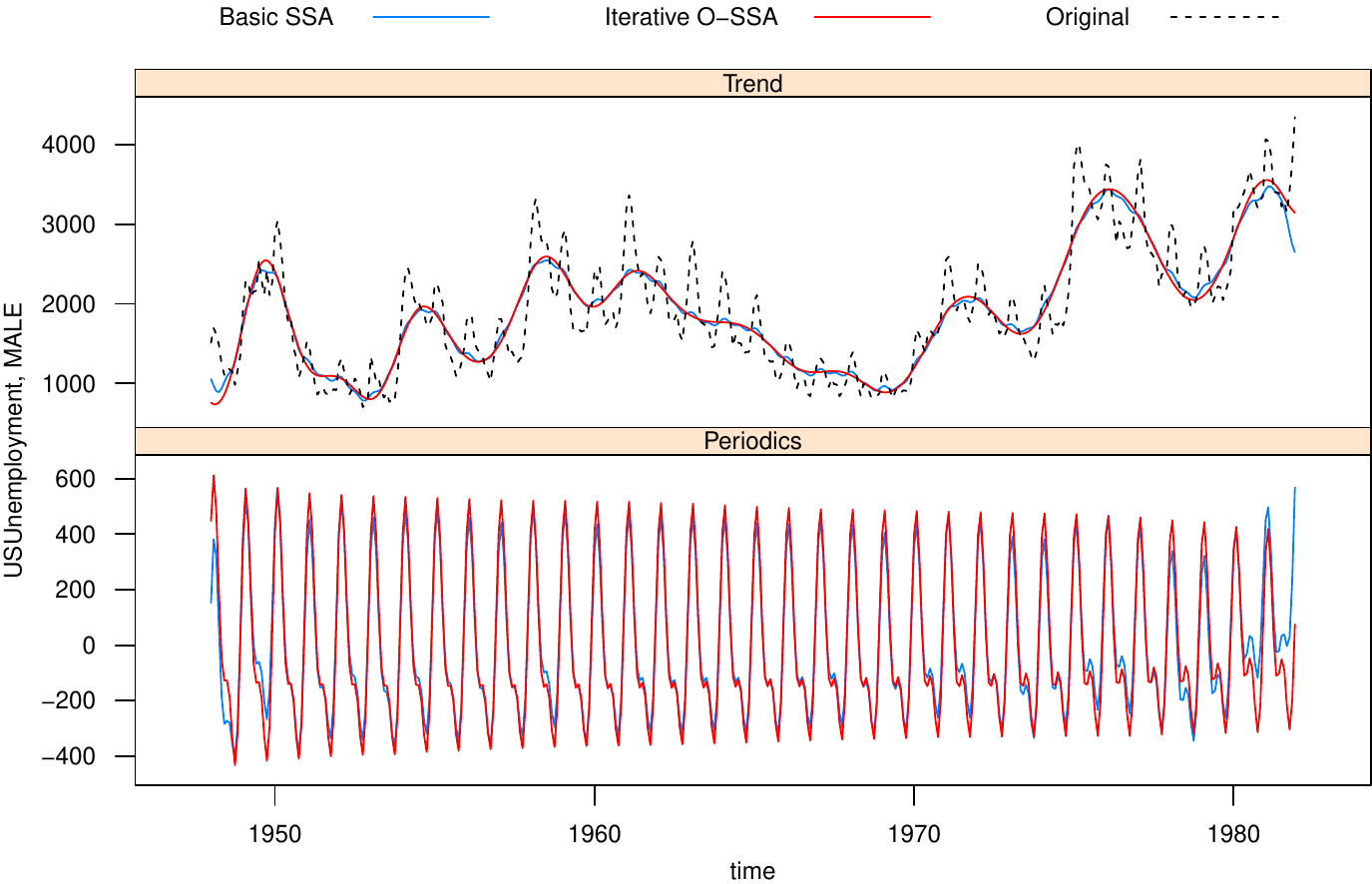}
    \caption{US unemployment: Decompositions by Basic SSA and Iterative O-SSA, which coincides with that by DerivSSA.}
    \label{fig:assa_empl}
\end{figure*}

Since Iterative O-SSA has possibility of sigma-correction, it also can help to move apart the decomposition components, and therefore we can
apply Iterative O-SSA to the group ET1--13 with the refined groups ET1--4,7--11 (trend) and ET5,6,12,13 (seasonality).
Since the components of the Basic SSA decomposition are mixed, we refer the components that contain mostly trend and slow cycles
to the first group and the components that contain mostly seasonality to the second group. As eigenvectors reflect
forms of the corresponding time series components, we can use the graph of eigenvectors shown in Fig.~\ref{fig:usun_evect}
for the initial grouping. For example,
the forth eigenvector looks like slow oscillations corrupted by seasonality and therefore we refer it to the trend group,
while the fifth eigenvector looks like seasonal component corrupted by something slow varying and we refer it to the seasonality group.
We apply one iteration with sigma-correction, taking $\varkappa=2$. After reordering caused by the sigma-correction,
the first trend group consists of the first eight components 1--8, while the second
seasonality group consists of 9--13 components, see Fig.~\ref{fig:usun_evect_assa}.

\begin{figure}[!htb]
    \centering
    \includegraphics{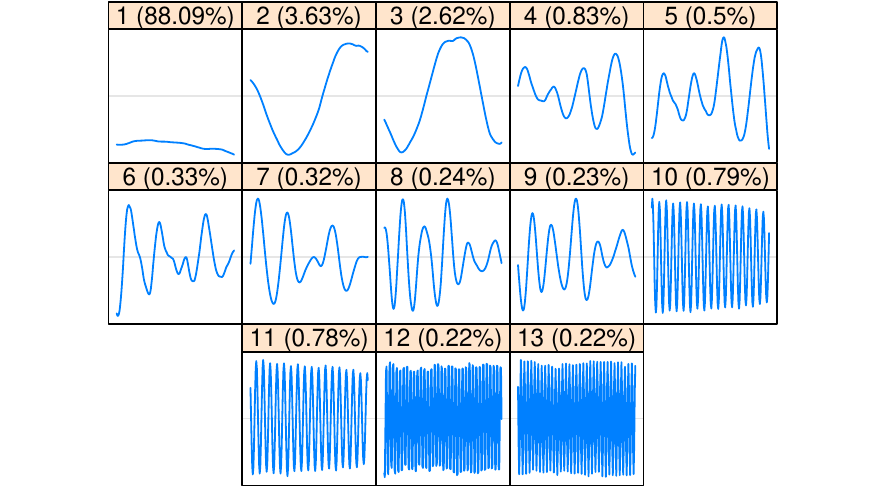}
    \caption{US unemployment: Iterative O-SSA eigenvectors}
    \label{fig:usun_evect_assa}
\end{figure}
\begin{figure}[!htb]
    \centering
    \begin{subfigure}{0.48\linewidth}
        \centering
        \includegraphics{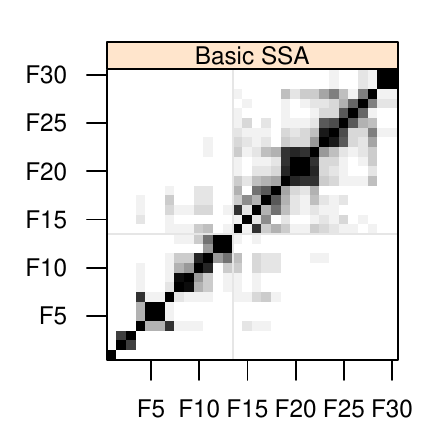}
    \end{subfigure}
    \begin{subfigure}{0.48\linewidth}
        \centering
        \includegraphics{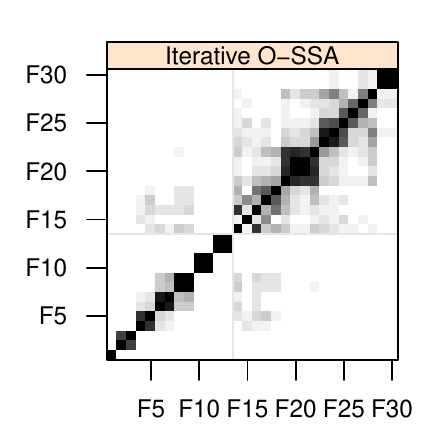}
    \end{subfigure}
    \centering
    \caption{US unemployment: $\bfw$-correlations before (left) and after (right) Iterative O-SSA}
    \label{fig:assa_empl1b}
\end{figure}
\begin{figure}[!htb]
    \centering
    \begin{subfigure}{0.48\linewidth}
        \centering
        \includegraphics{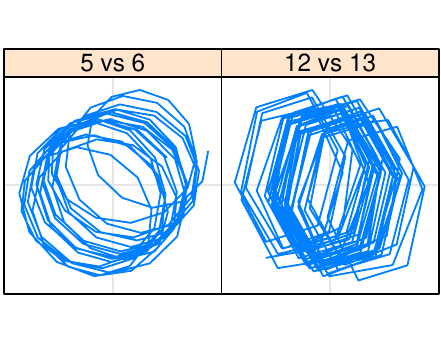}
    \end{subfigure}
    \begin{subfigure}{0.48\linewidth}
        \centering
        \includegraphics{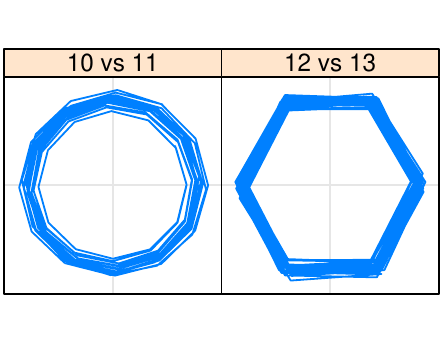}
    \end{subfigure}
    \caption{US unemployment: 2D plots of periodic eigenvectors before (left) and after (right) Iterative O-SSA}
    \label{fig:usun_paired_compare}
\end{figure}

The trend eigenvectors of the DerivSSA decomposition (Fig.~\ref{fig:usun_evect_fssa}, ET5--13) differ from that
of the O-SSA decomposition (Fig.~\ref{fig:usun_evect_assa}, ET1--8), the seasonality components are almost the same.
Nevertheless, the result of Iterative O-SSA reconstruction is visibly the same as that of DerivSSA shown in Fig.~\ref{fig:assa_empl}
and therefore we do not depict this reconstruction.

Fig.~\ref{fig:assa_empl1b} contains the $\bfw$-correlations between the elementary components provided by Basic SSA (left)
and the $\bfw$-correlations between the elementary components
reconstructed by Iterative O-SSA (right). The figure confirms the improving of separability.
Note that although an oblique decomposition was formally obtained, this decomposition is almost $\rmF$-orthogonal
(the maximal $\rmF$-correlation  between elementary matrix components, which is calculated as $\langle\bfX_i,\bfX_j\rangle_\rmF/(\|\bfX_i\|_\rmF\|\bfX_j\|_\rmF)$, is equal to 0.00368);
therefore, conventional $\bfw$-correlations are appropriate, see Appendix~\ref{sec:app}.
For trend extraction, it is important that correlations between trend and seasonality
groups are close to zero. Really, correlations between ET1--8 and ET9--13 are small.
Mixture of the components within the trend group is not important.
One can see that the trend components are still slightly mixed with the noise components.
However, we had a mixture with the residual before iterations (left) and
this cannot be corrected by Iterative O-SSA (right),
since the nested version is used.
Fig.~\ref{fig:usun_paired_compare} shows the improvement of separability with the help of
scatterplots of seasonal eigenvectors. After one iteration, plots of seasonal eigenvectors
form almost regular polygons.

Figures for the decomposition of DerivSSA analogous to Fig.~\ref{fig:assa_empl1b} and \ref{fig:usun_paired_compare} are very similar
and we do not present them in the paper.
Note that in DerivSSA we group components after their separation, what is easier
than to group mixing components for Iterative O-SSA before separation.
That is, in the considered example the resultant decomposition is the same, but application of DerivSSA is easier.

\subsection{Improving of weak separability}
Let us consider the series `Fortified wine'
(fortified wine sales, Australia, monthly, from January 1980, thousands of litres)
taken from \cite{Hyndman2013}.
The first 120 points of the series are depicted in Fig.~\ref{fig:fort2_rec}.

The series length is long enough to obtain weak separability; therefore, we will
consider short subseries to demonstrate the advantage of Iterative O-SSA for improving
of weak separability.

We take here the window length $L=18$ to make the difference between methods clearly visible on the figures, although the relation between
accuracies of the considered methods is the same for other choices of window lengths. Let us consider two subseries,
from 30th to 78th points and from 36th to 84th points. The difference consists in behavior of
the seasonality at the ends of the subseries.

\begin{figure}[!htb]
    \centering
    \includegraphics{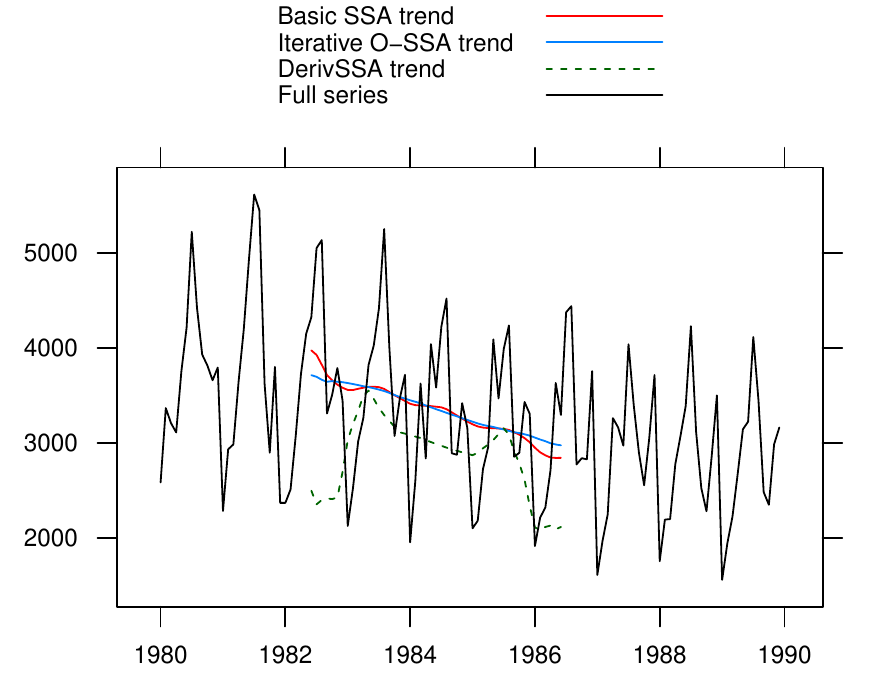}
    \caption{Fortified wines: trend reconstruction by DerivSSA and Iterative O-SSA for subseries of points 30--78. }
    \label{fig:fort2_rec}
\end{figure}

\begin{figure}[!htb]
    \centering
    \includegraphics{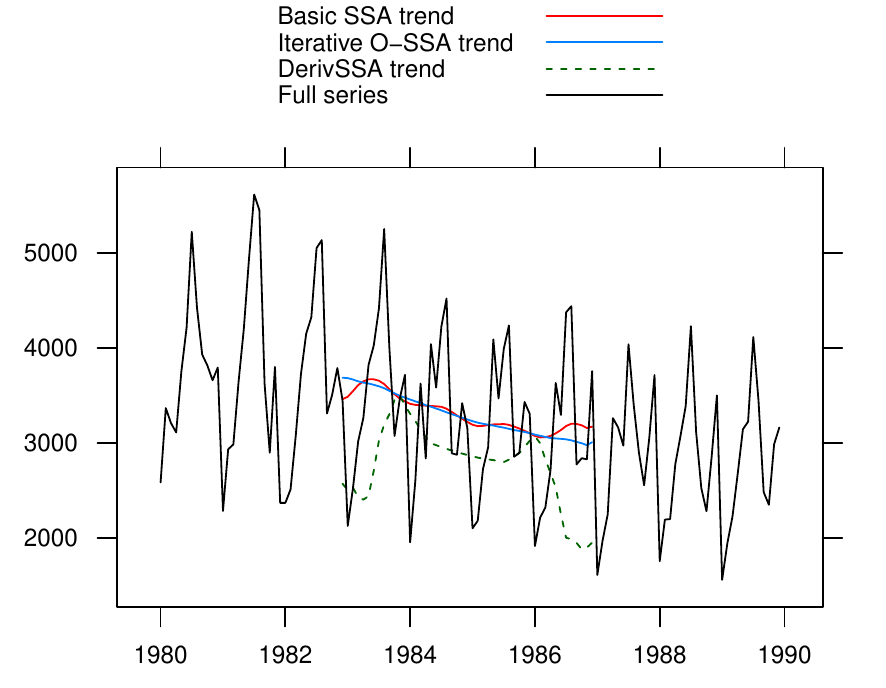}
    \caption{Fortified wines: trend reconstruction by DerivSSA and Iterative O-SSA for subseries of points 36--84. }
    \label{fig:fort1_rec}
\end{figure}

As well as in the previous example, we start with Basic SSA. ET1 is identified as corresponding
to trend, other components are produced by seasonality and noise (we do not include their pictures).
One can see in Fig.~\ref{fig:fort2_rec} and \ref{fig:fort1_rec} (red line) that the reconstructed trend
is slightly mixed with the seasonality and steps after the seasonality at the ends of the series.

To apply Iterative O-SSA, we should choose a group of elementary components
containing the trend components and approximately separated from the residual.
Let it be ET1--7. Thus, we apply one iteration of O-SSA to the refined groups ET1 and ET2--7.
Since the trend has the contribution much larger than the residual,
we consider Iterative O-SSA with no sigma-correction.
The result of reconstruction is much more relevant, see Fig.~\ref{fig:fort2_rec} and \ref{fig:fort1_rec} (blue line).
Green line in the same figures shows that DerivSSA gives more poor reconstruction than Basic SSA in this example.

\section{Conclusion}
\label{sec:conclusion}
We suggested two modifications of SSA, which can considerably improve the separability and thereby the reconstruction accuracy.
Iterative O-SSA shows its advantage dealing with separation of sine waves with close frequencies and with extraction
of trend for short series.
DerivSSA shows its advantage in conditions of weak separability dealing with long enough series
with complex-form trends and sine waves with equal amplitudes.

We demonstrated that for separation of trend even one iteration of Iterative O-SSA can improve the separability.
while DerivSSA works only in conditions of approximate weak separability.
On the other hand, for separability of weakly separable sine waves with equal amplitudes DeriveSSA works
more effectively than Iterative O-SSA.

The important aspect of both methods is that they should be applied to the estimated signal subspace (more general,
to the estimated subspace of the sum of components that we want to separate), that is,
they work in a nested manner. We can consider the methods as refining of the decomposition obtained by Basic SSA
(generally, the subspace estimation can be performed by any method, not necessarily by SSA).
Despite the both methods have the underlying model of series governed by linear recurrence relations,
the methods do not use the model directly. This allows one to apply the methods even if the signal
satisfies the model only locally. For example, the trend usually does not satisfy an LRR exactly; however, it
can be extracted by SSA and its considered variations.

The common part of the methods is the generalized SVD (so called Restricted SVD), which provides decompositions that are not
bi-orthogonal with respect to the conventional inner product. These methods do not use the optimality
properties of the generalized SVD; however, this is not essential for their success in the signal
decomposition.

The further development of the considered methods can consists in their combination for effective solution of the problem of
lack of both weak and strong separability and in the use of the obtained improved non-orthogonal decomposition for forecasting.

\appendix

\section{Inner products and related matrix decompositions}
\label{sec:app}

Here we provide the necessary information about matrix decompositions with respect to
given inner products in the row and column spaces (see e.g. \cite[Th.3]{VanLoan1976}), which are
called in \cite{DeMoor.Golub1991} Restricted SVD (RSVD).

\subsection{Inner products}
Usually, orthogonality of vectors in $\spaceR^M$ is considered in a conventional manner:
$X_1$ and $X_2$ in $\spaceR^M$ are orthogonal if their Euclidean inner product is equal to 0, i.e.
$(X_1,X_2)_M=0$, where $(\cdot,\cdot)_M$ is the standard inner product in $\spaceR^M$. Sometimes we will omit the dimension
in denotation if it is clear from the context.
It is well-known that any inner product in $\spaceR^M$ can be defined as
$\langle X_1,X_2 \rangle_\bfA=X_1^\rmT \bfA X_2$ for a symmetric positive-definite matrix $\bfA$.
For any $\bfO_\bfA$ such that $\bfO_\bfA^\rmT \bfO_\bfA=\bfA$ we have $\langle X_1,X_2 \rangle_\bfA=(\bfO_\bfA X_1,\bfO_\bfA X_2)_M$.
Evidently, $\bfO_\bfA$ is defined up to multiplication by an orthogonal matrix.

The inner product yields the notion of orthogonality.
We will say that two vectors are \emph{$\bfA$-orthogonal} if $\langle X_1,X_2 \rangle_\bfA=0$.

Let the matrix $\bfA$ be symmetric positive semi-definite, $\rank(\bfA)=r$.
Then $\bfA$ can be decomposed as $\bfA=\bfO_\bfA^\rmT \bfO_\bfA$ with $\bfO_\bfA \in \cM_{r,M}$
and $\langle X_1,X_2 \rangle_\bfA=(\bfO_\bfA X_1,\bfO_\bfA X_2)_r$.
Note that the row space of $\bfO_\bfA$ is the same for any choice of $\bfO_\bfA$ and coincides with the column
space of $\bfA$.
If the matrix $\bfA$ is not positive definite, then we obtain a degenerate inner product,
that is, if $\langle X,X \rangle_\bfA=0$, then it is not necessary that $X=0_M$.
However, for vectors belonging to the column space of $\bfA$ the equality $\langle X,X \rangle_\bfA=0$ yields $X=0_M$.
Thus, if we consider inner product generated by a rank-deficient matrix $\bfA$,
then we should consider it only on the column space of $\bfA$.
In particular, we can correctly define $\bfA$-orthogonality of vectors from the column space
of $\bfA$.

The following evident proposition shows that any basis can be considered as $\bfA$-orthonormal for some
choice of $\bfO_\bfA$.

\begin{proposition}
\label{prop:orth}
  Let $P_1,\ldots,P_r$ be a set of linearly independent vectors in $\spaceR^M$.
  Then $P_1,\ldots,P_r$ are $\bfA$-orthonormal for
  $\bfO_\bfA=\bfP^\dag$, where $\bfP=[P_1:\ldots:P_r]$.
\end{proposition}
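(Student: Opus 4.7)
The plan is to unpack the definitions and use the defining property of the Moore--Penrose pseudo-inverse of a full column rank matrix. Since $P_1,\ldots,P_r$ are linearly independent, the matrix $\bfP=[P_1:\ldots:P_r]\in\cM_{M,r}$ has full column rank, so $\bfP^\dag=(\bfP^\rmT\bfP)^{-1}\bfP^\rmT$ and therefore $\bfP^\dag\bfP=\bfI_r$. In particular, $\bfP^\dag P_i=e_i$, the $i$th standard basis vector of $\spaceR^r$.

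With $\bfO_\bfA=\bfP^\dag$, the associated matrix is $\bfA=\bfO_\bfA^\rmT\bfO_\bfA=(\bfP^\dag)^\rmT\bfP^\dag$, a symmetric positive semi-definite matrix of rank $r$ (its column space being the span of the $P_i$, where the inner product is non-degenerate). The induced inner product on $\sspan\{P_1,\ldots,P_r\}$ is
\[
\langle X_1,X_2\rangle_\bfA=(\bfO_\bfA X_1,\bfO_\bfA X_2)_r=(\bfP^\dag X_1,\bfP^\dag X_2)_r.
\]
Substituting $X_1=P_i$ and $X_2=P_j$ and using $\bfP^\dag P_i=e_i$, we obtain $\langle P_i,P_j\rangle_\bfA=(e_i,e_j)_r=\delta_{ij}$, which is exactly $\bfA$-orthonormality of the system $\{P_i\}_{i=1}^r$.

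I expect no real obstacle: the statement is essentially a rephrasing of the identity $\bfP^\dag\bfP=\bfI_r$ for full-column-rank $\bfP$, and the only subtlety worth mentioning is that $\bfA$ is merely positive semi-definite (rank $r$, not $M$), which is already accommodated by the earlier convention that the inner product $\langle\cdot,\cdot\rangle_\bfA$ is applied only to vectors in the column space of $\bfA$; since each $P_i$ lies in this column space, the argument is valid.
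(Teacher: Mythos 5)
Your proof is correct. The paper itself states this proposition as ``evident'' and supplies no proof, and your argument is precisely the standard verification one would give: since $\bfP$ has full column rank, $\bfO_\bfA P_i=\bfP^\dag P_i=e_i$, so $\langle P_i,P_j\rangle_\bfA=(e_i,e_j)_r=\delta_{ij}$, with the correct remark that each $P_i$ lies in the column space of $\bfA=(\bfP^\dag)^\rmT\bfP^\dag$ so the semi-definite inner product is legitimately applied.
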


Note that the column space of $\bfP$ coincides with the row space of $\bfO_\bfA$.
We call a matrix $\bfO_\bfA$ that makes a set $P_1,\ldots,P_r$ $\bfA$-orthonormal
\emph{orthonormalizing matrix} of this set. Certainly, the orthonormalizing matrix
is not uniquely defined.

\subsection{Oblique decompositions}
Let us consider a minimal decompositions of $\bfY\in \cM_{L,K}$ of rank $r$ in the form
\begin{equation}
\label{eq:LRSVD}
  \bfY=\sum_{i=1}^r \sigma_i P_i Q_i^\rmT,
\end{equation}
where $\sigma_1 \geq \sigma_2 \geq \ldots\geq \sigma_r>0$, $\{P_i\}_{i=1}^r$ and $\{Q_i\}_{i=1}^r$ are linearly independent (therefore,
$\{P_i\}_{i=1}^r$ is a basis of the column space of $\bfY$, $\{Q_i\}_{i=1}^r$ is a basis of the row space of $\bfY$).
It is convenient to write \eqref{eq:LRSVD} in the matrix form: $\bfY=\bfP\mathbf{\Sigma}\bfQ^{\rmT}$, where
$\bfP=[P_1:\ldots:P_r]$, $\bfQ=[Q_1:\ldots:Q_r]$ and $\mathbf{\Sigma}=\diag(\sigma_1,\ldots,\sigma_r)$.

\begin{proposition}
\label{prop:babilon}
Let $\bfO_\bfL$ be an orthonormalizing matrix of $\{P_i\}_{i=1}^r$ and $\bfO_\bfR$ be an orthonormalizing matrix of $\{Q_i\}_{i=1}^r$. Then
\begin{equation}
\label{eq:defLRSVD}
  \bfO_\bfL\bfY\bfO_\bfR^\rmT=\sum_{i=1}^r \sigma_i (\bfO_\bfL P_i) (\bfO_\bfR Q_i)^\rmT
\end{equation}
is an SVD of $\bfO_\bfL\bfY\bfO_\bfR^\rmT\in \cM_{r,r}$ with the left singular vectors $\bfO_\bfL P_i\in \spaceR^r$ and
the right singular vectors $\bfO_\bfR Q_i\in \spaceR^r$.
\end{proposition}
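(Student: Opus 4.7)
The plan is to treat Proposition~\ref{prop:babilon} as a direct verification: first derive the displayed identity by linear algebra, then check the three structural conditions that characterize an SVD. No deep idea is needed; the content is purely bookkeeping from the definition of ``orthonormalizing matrix'' introduced just above via Proposition~\ref{prop:orth}.

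First, starting from $\bfY=\sum_{i=1}^r \sigma_i P_i Q_i^\rmT$, I would multiply on the left by $\bfO_\bfL$ and on the right by $\bfO_\bfR^\rmT$, pulling the two fixed matrices inside the sum by linearity and using $(\bfO_\bfL P_i)(\bfO_\bfR Q_i)^\rmT=\bfO_\bfL P_i Q_i^\rmT\bfO_\bfR^\rmT$. This immediately produces the right-hand side of~\eqref{eq:defLRSVD}.

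Next, I would verify the three conditions defining an SVD of the $r\times r$ matrix $\bfO_\bfL\bfY\bfO_\bfR^\rmT$. The singular-value condition $\sigma_1\geq \sigma_2\geq\ldots\geq \sigma_r>0$ is inherited from the hypothesis on the decomposition~\eqref{eq:LRSVD}. For the orthonormality of the left factors, I would invoke the definition of an orthonormalizing matrix: by construction $\bfO_\bfL^\rmT \bfO_\bfL=\bfL$ on the column space of $\bfP$, so
\[
(\bfO_\bfL P_i)^\rmT (\bfO_\bfL P_j)=(\bfO_\bfL P_i,\bfO_\bfL P_j)_r=\langle P_i,P_j\rangle_\bfL=\delta_{ij}.
\]
The same argument with $\bfO_\bfR$ and $\{Q_i\}_{i=1}^r$ handles the right factors. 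Since I then have $r$ orthonormal vectors in $\spaceR^r$, each family is in fact an orthonormal basis of $\spaceR^r$, which is all one needs for a full (thin) SVD of a rank-$r$ matrix in $\cM_{r,r}$.

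I do not anticipate a real obstacle. The only conceptual point worth stating carefully is that ``orthonormalizing'' here means that the image vectors $\bfO_\bfL P_i$ are orthonormal in the standard Euclidean inner product on $\spaceR^r$, which is exactly the reformulation $\langle X_1,X_2\rangle_\bfA=(\bfO_\bfA X_1,\bfO_\bfA X_2)_r$ of the $\bfA$-inner product discussed in Proposition~\ref{prop:orth}. Once that reformulation is invoked, the rest is a one-line check and the proposition follows.
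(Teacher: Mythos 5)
Your proof is correct and follows essentially the same route as the paper, which disposes of the proposition in one line by citing the fact that any bi-orthogonal decomposition with positive ordered weights is an SVD; you simply make the bi-orthogonality check $(\bfO_\bfL P_i,\bfO_\bfL P_j)_r=\langle P_i,P_j\rangle_\bfL=\delta_{ij}$ (and its right-hand analogue) explicit. No gap.
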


This proposition follows from the fact that any bi-orthogonal decomposition is an SVD.

\begin{definition}
\label{def:agreed}
If the column space of $\bfL$ contains the column space of $\bfY$ and the column space of $\bfR$ contains the row space of $\bfY$,
then we will call such a pair $(\bfL,\bfR)$ \emph{consistent with} the matrix $\bfY$.
\end{definition}

\begin{definition}
\label{def:LRSVD}
For $(\bfL,\bfR)$ consistent with $\bfY$, we say that \eqref{eq:LRSVD} is an $(\bfL,\bfR)$-SVD, if the system $\{P_i\}_{i=1}^r$ is $\bfL$-orthonormal and
the system $\{Q_i\}_{i=1}^r$ is $\bfR$-orthonormal.
\end{definition}

In a matrix statement of problem \cite{DeMoor.Golub1991}, the $(\bfL,\bfR)$-SVD is called Restricted SVD of
$\bfY$ with respect to $(\bfL,\bfR)$.

It follows from Definition~\ref{def:LRSVD} that \eqref{eq:defLRSVD} is an SVD  if and only if \eqref{eq:LRSVD} is an $(\bfL,\bfR)$-SVD,
where $\bfL=\bfO_\bfL^\rmT \bfO_\bfL$ and $\bfR=\bfO_\bfR^\rmT \bfO_\bfR$, $\bfO_\bfL$ and $\bfO_\bfR$ are orthonormalizing.

Proposition~\ref{prop:babilon} says that any minimal decomposition into a sum of matrices of rank 1 in the form
\eqref{eq:LRSVD} is the $(\bfL,\bfR)$-SVD for some matrices $\bfL$ and $\bfR$.
%




\begin{proposition}
\label{prop:calcLRSVD}
Let
\begin{equation}
\label{eq:LVSVDcalc}
  \bfO_\bfL\bfY\bfO_\bfR^\rmT=\sum_{i=1}^r \sqrt{\lambda_i} U_i V_i^\rmT
\end{equation}
be the ordinary SVD of $\bfO_\bfL\bfY\bfO_\bfR^\rmT$.
Then the decomposition \eqref{eq:LRSVD} with $\sigma_i=\sqrt{\lambda_i}$, $P_i=\bfO_\bfL^\dag U_i$ and $Q_i=\bfO_\bfR^\dag V_i$ is the $(\bfL,\bfR)$-SVD.
\end{proposition}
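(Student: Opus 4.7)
The plan is to verify the two defining properties of an $(\bfL,\bfR)$-SVD from Definition~\ref{def:LRSVD}: the reconstruction equality $\bfY=\sum_i\sigma_i P_i Q_i^{\rmT}$, and the $\bfL$-orthonormality of $\{P_i\}$ together with the $\bfR$-orthonormality of $\{Q_i\}$. Both verifications rest on two standard Moore--Penrose identities: $\bfO_\bfL\bfO_\bfL^{\dag}$ is the orthogonal projector onto the column space of $\bfO_\bfL$, and $\bfO_\bfL^{\dag}\bfO_\bfL$ is the orthogonal projector onto the row space of $\bfO_\bfL$, which equals the column space of $\bfL$; the analogous statements hold for $\bfO_\bfR$.

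For the reconstruction, I would substitute $P_i=\bfO_\bfL^{\dag}U_i$ and $Q_i=\bfO_\bfR^{\dag}V_i$ into $\sum_i\sigma_i P_i Q_i^{\rmT}$, pull $\bfO_\bfL^{\dag}$ out on the left and $(\bfO_\bfR^{\dag})^{\rmT}=(\bfO_\bfR^{\rmT})^{\dag}$ out on the right, and use \eqref{eq:LVSVDcalc} to rewrite the remaining sum as $\bfO_\bfL\bfY\bfO_\bfR^{\rmT}$. This yields
\begin{equation*}
\sum_{i=1}^r \sigma_i P_i Q_i^{\rmT}=\bfO_\bfL^{\dag}\bfO_\bfL\,\bfY\,\bfO_\bfR^{\rmT}(\bfO_\bfR^{\rmT})^{\dag}.
\end{equation*}
The consistency of $(\bfL,\bfR)$ with $\bfY$ (Definition~\ref{def:agreed}) places the column space of $\bfY$ in the column space of $\bfL$, i.e.\ in the row space of $\bfO_\bfL$, and the row space of $\bfY$ in the column space of $\bfR$, i.e.\ in the row space of $\bfO_\bfR$. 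Hence $\bfO_\bfL^{\dag}\bfO_\bfL$ fixes each column of $\bfY$, and the transposed projector $\bfO_\bfR^{\rmT}(\bfO_\bfR^{\rmT})^{\dag}=(\bfO_\bfR^{\dag}\bfO_\bfR)^{\rmT}$ fixes each row, so the right-hand side collapses to $\bfY$.

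For $\bfL$-orthonormality I would expand
\begin{equation*}
\langle P_i,P_j\rangle_\bfL=P_i^{\rmT}\bfO_\bfL^{\rmT}\bfO_\bfL P_j=(\bfO_\bfL\bfO_\bfL^{\dag}U_i)^{\rmT}(\bfO_\bfL\bfO_\bfL^{\dag}U_j).
\end{equation*}
Because $U_i$ is a left singular vector of $\bfO_\bfL\bfY\bfO_\bfR^{\rmT}$, it lies in the column space of this product and hence in the column space of $\bfO_\bfL$; therefore $\bfO_\bfL\bfO_\bfL^{\dag}U_i=U_i$, and the inner product reduces to $U_i^{\rmT}U_j=\delta_{ij}$. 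The $\bfR$-orthonormality of $\{Q_i\}$ follows by the symmetric computation applied to the right singular vectors $V_i$.

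The main obstacle is a bookkeeping one rather than a conceptual one: it is crucial to verify in each step that the vector or matrix being projected really belongs to the subspace on which the relevant pseudoinverse factor acts as the identity. Once the containments supplied by the consistency hypothesis and by the SVD structure of \eqref{eq:LVSVDcalc} are laid out explicitly, the remaining manipulations are elementary.
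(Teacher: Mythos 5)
Your proof is correct. The paper itself gives essentially no argument: it states that Proposition~\ref{prop:calcLRSVD} ``follows from Proposition~\ref{prop:babilon}'', i.e.\ from the fact that any bi-orthogonal decomposition of the form \eqref{eq:defLRSVD} is an SVD, leaving the reader to run that correspondence backwards (identify $\sqrt{\lambda_i}U_iV_i^\rmT$ with $\sigma_i(\bfO_\bfL P_i)(\bfO_\bfR Q_i)^\rmT$ and invert $\bfO_\bfL$, $\bfO_\bfR$ on the relevant subspaces). You instead verify the two defining properties of Definition~\ref{def:LRSVD} directly, which is a genuinely more explicit route: the projector identities $\bfO_\bfL^\dag\bfO_\bfL\bfY=\bfY$ and $\bfO_\bfL\bfO_\bfL^\dag U_i=U_i$ are exactly the points the paper's one-line derivation sweeps under the rug, and your treatment makes visible where the consistency hypothesis of Definition~\ref{def:agreed} is actually used (for the reconstruction step) versus where only the SVD structure of $\bfO_\bfL\bfY\bfO_\bfR^\rmT$ is needed (for the orthonormality step). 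The only detail left tacit is that the resulting $\{P_i\}$ and $\{Q_i\}$ are linearly independent, as required by the form \eqref{eq:LRSVD}; this is immediate since each $P_i$ lies in the column space of $\bfL$, where the $\bfL$-inner product is nondegenerate, so $\bfL$-orthonormality forces independence (and likewise for the $Q_i$). With that one-line addition your argument is a complete, self-contained proof.
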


Proposition~\ref{prop:calcLRSVD} follows from Proposition~\ref{prop:babilon} and provides the method how the $(\bfL,\bfR)$-SVD can be calculated (see Algorithm~\ref{alg:LRSVD}).

Let us show how we can change the set of $\sigma_i$ in the $(\bfL,\bfR)$-SVD \eqref{eq:LRSVD} without change
of directions of $P_i$ and $Q_i$, that is, of $P_i/\|P_i\|$ and $Q_i/\|Q_i\|$.

\begin{proposition}
\label{prop:reord}
Let \eqref{eq:LRSVD} be the $(\bfL,\bfR)$-SVD with $\bfO_\bfL=\bfP^\dag$ and $\bfO_\bfR=\bfQ^\dag$.
Then
\begin{equation}
\label{eq:LRSVDsigma}
  \bfY=\sum_{i=1}^r \wtilde\sigma_i \wtilde{P}_i \wtilde{Q}_i^\rmT,
\end{equation}
where $\wtilde\sigma_i=\sigma_i/(\mu_i\nu_i)$, $\wtilde{P}_i=\mu_i P_i$ and $\wtilde{Q}_i=\nu_i Q_i$,
is (after reordering of $\wtilde\sigma_i$) the $(\wtilde\bfL,\wtilde\bfR)$-SVD with
$\bfO_{\wtilde\bfL}=\wtilde\bfP^\dag$ and $\bfO_{\wtilde\bfR}=\wtilde\bfQ^\dag$.
\end{proposition}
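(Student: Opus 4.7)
The plan is to verify the three requirements of Definition~\ref{def:LRSVD} for the rescaled decomposition, invoking Proposition~\ref{prop:orth} to produce the required orthonormalizing matrices. The heavy lifting has already been done by the earlier propositions; what remains is essentially bookkeeping.

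First, I would check that \eqref{eq:LRSVDsigma} indeed equals $\bfY$. This is immediate from the definitions: for each $i$,
\[
\wtilde\sigma_i \wtilde P_i \wtilde Q_i^\rmT = \frac{\sigma_i}{\mu_i\nu_i}\,(\mu_i P_i)(\nu_i Q_i)^\rmT = \sigma_i P_i Q_i^\rmT,
\]
so summing over $i$ recovers the original $(\bfL,\bfR)$-SVD \eqref{eq:LRSVD}. Note also that since $\mu_i,\nu_i\neq 0$ (otherwise $\wtilde\sigma_i$ would be undefined), the vectors $\{\wtilde P_i\}$ and $\{\wtilde Q_i\}$ remain linearly independent bases of the column and row spaces of $\bfY$, respectively, so \eqref{eq:LRSVDsigma} is still a minimal rank-$1$ decomposition.

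Next, I would apply Proposition~\ref{prop:orth} to the families $\{\wtilde P_i\}$ and $\{\wtilde Q_i\}$. That proposition says that for any linearly independent family, its pseudoinverse-of-the-stacked-matrix is an orthonormalizing matrix. Thus $\bfO_{\wtilde\bfL}=\wtilde\bfP^\dag$ makes $\{\wtilde P_i\}$ orthonormal with respect to $\wtilde\bfL = \bfO_{\wtilde\bfL}^\rmT \bfO_{\wtilde\bfL} = (\wtilde\bfP^\dag)^\rmT \wtilde\bfP^\dag$, and symmetrically $\bfO_{\wtilde\bfR}=\wtilde\bfQ^\dag$ does the same job for the $\wtilde\bfR$-inner product on the right. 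One should note that $(\wtilde\bfL,\wtilde\bfR)$ is consistent with $\bfY$ in the sense of Definition~\ref{def:agreed}, since the column space of $\wtilde\bfL=(\wtilde\bfP^\dag)^\rmT \wtilde\bfP^\dag$ is precisely the column space of $\wtilde\bfP$, which equals the column space of $\bfY$; likewise on the right.

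Finally, I would permute the triples $(\wtilde\sigma_i,\wtilde P_i,\wtilde Q_i)$ so that the $\wtilde\sigma_i$ are non-increasing; this reordering preserves bi-orthonormality because each inner product $\langle \wtilde P_i,\wtilde P_j\rangle_{\wtilde\bfL}$ and $\langle \wtilde Q_i,\wtilde Q_j\rangle_{\wtilde\bfR}$ is unchanged. After reordering, \eqref{eq:LRSVDsigma} satisfies all the clauses of Definition~\ref{def:LRSVD} and is therefore an $(\wtilde\bfL,\wtilde\bfR)$-SVD. There is no real obstacle here: the only subtlety worth flagging is to make sure one keeps the formulas $\bfO_{\wtilde\bfL}=\wtilde\bfP^\dag$ and $\bfO_{\wtilde\bfR}=\wtilde\bfQ^\dag$ written in terms of the reordered $\wtilde\bfP$ and $\wtilde\bfQ$, so that the identification of left/right singular vectors in Proposition~\ref{prop:babilon} applies cleanly.
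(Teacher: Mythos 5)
Your argument is correct and is exactly the route the paper intends: the paper states this proposition without an explicit proof, treating it as an immediate consequence of Proposition~\ref{prop:orth} and Proposition~\ref{prop:babilon}, which is precisely what you have written out (rescaling cancels in each rank-one term, the rescaled families stay linearly independent and are orthonormalized by the pseudo-inverses, and reordering is harmless). The only detail worth adding is that one needs $\mu_i\nu_i>0$, not merely $\mu_i\nu_i\neq 0$, so that the reordered $\wtilde\sigma_i$ are positive as required by \eqref{eq:LRSVD}.
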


The case of one-side non-orthogonal decompositions, when one of the matrices, $\bfR$ or $\bfL$, is identical, is of special concern.
It is shown in \cite{DeMoor.Golub1991} that then Restricted SVD is Quotient SVD (often called Generalized SVD \cite{Paige.Saunders1981}).
If $\bfL$ is the identity matrix, then $P_i$, $i=1,\ldots,r$, are orthonormal in the conventional sense and
form an orthonormal basis of the column space of $\bfY$.
If $\bfR$ is the identity matrix, then $Q_i$, $i=1,\ldots,r$, are orthonormal and
constitute an orthonormal basis of the row space.

\subsection{Matrix scalar products and approximations}
\label{sec: minnerprod}

Let $\bfX, \bfY \in \cM_{L,K}$, $(\bfL,\bfR)$ be consistent with both $\bfX$ and $\bfY$, $\bfL=\bfO_\bfL^\rmT \bfO_\bfL$
and $\bfR=\bfO_\bfR^\rmT \bfO_\bfR$.

Define the induced Frobenius inner product as
\bea
\langle \bfX, \bfY \rangle_{\rmF,(\bfL,\bfR)}=\langle\bfO_\bfL\bfX\bfO_\bfR^\rmT, \bfO_\bfL\bfY\bfO_\bfR^\rmT\rangle_\rmF.
\eea
Note that the definition does not depend on the choice of $\bfO_\bfL$ and $\bfO_\bfR$,
since $\langle\bfC,\bfD\rangle_\rmF=\tr(\bfC^\rmT\bfD)=\tr(\bfC\bfD^\rmT)$.

For two matrices $\bfC$ and $\bfD$ we say that they \\
1. $(\bfL,\bfR)$ $\rmF$-orthogonal if
$\langle \bfC, \bfD \rangle_{\rmF,(\bfL,\bfR)}=0$, \\
2. $(\bfL,\bfR)$-left orthogonal if $\bfC^\rmT\bfL\bfD=\bf0$, \\
3. $(\bfL,\bfR)$-right orthogonal if $\bfC\bfR\bfD^\rmT=\bf0$,\\
4. $(\bfL,\bfR)$ bi-orthogonal if the left and right orthogonalities hold.

Left or right orthogonality is the sufficient condition for $\rmF$-orthogonality.
The matrix components of an $(\bfL,\bfR)$-SVD are $(\bfL,\bfR)$ bi-orthogonal and therefore $(\bfL,\bfR)$ $\rmF$-orthogonal.

The measure of $(\bfL,\bfR)$ orthogonality is
\begin{equation}
\label{eq:ABcorr}
\rho_{(\bfL,\bfR)}(\bfX, \bfY)=\frac{\langle \bfX, \bfY \rangle_{\rmF,(\bfL,\bfR)}}
{\| \bfX \|_{\rmF,(\bfL,\bfR)} \| \bfY \|_{\rmF,(\bfL,\bfR)}}.
\end{equation}

Let $\bfX=\sum_i \bfX_i$, where $\bfX_i=\sigma_i P_i Q_i^\rmT$, be the $(\bfL,\bfR)$-SVD. Then $\| \bfX_i \|_{\rmF,(\bfL,\bfR)}=\sigma_i$ and $\| \bfX \|^2_{\rmF,(\bfL,\bfR)}=\sum_i \sigma_i^2$.
The contribution of $\bfX_k$ is equal to $\sigma_k^2 / \sum_i \sigma_i^2$.

The following proposition follows from the representation of the Frobenius scalar product
through the trace of matrix multiplication.
\begin{proposition}
  If $\bfX$ and $\bfY$ are $(\bfL,\bfR)$ left-orthogonal, then $\bfX$ and $\bfY$ are $(\bfL,\wtilde\bfR)$ $\rmF$-orthogonal for any $\wtilde\bfR$.
\end{proposition}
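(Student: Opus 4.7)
The plan is to unpack the definition of $\langle \bfX, \bfY \rangle_{\rmF,(\bfL,\wtilde\bfR)}$ using the representation via $\bfO_\bfL$ and $\bfO_{\wtilde\bfR}$ satisfying $\bfO_\bfL^\rmT\bfO_\bfL=\bfL$ and $\bfO_{\wtilde\bfR}^\rmT\bfO_{\wtilde\bfR}=\wtilde\bfR$, then rewrite the resulting Frobenius inner product as a trace and cycle the factors so that the combination $\bfX^\rmT\bfL\bfY$ appears in the middle.

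First I would write
\[
\langle \bfX, \bfY \rangle_{\rmF,(\bfL,\wtilde\bfR)} = \langle \bfO_\bfL\bfX\bfO_{\wtilde\bfR}^\rmT,\bfO_\bfL\bfY\bfO_{\wtilde\bfR}^\rmT\rangle_\rmF = \tr\bigl((\bfO_\bfL\bfX\bfO_{\wtilde\bfR}^\rmT)^\rmT(\bfO_\bfL\bfY\bfO_{\wtilde\bfR}^\rmT)\bigr).
\]
Then I would distribute the transpose and use the cyclic invariance of the trace to regroup the middle factor as $\bfO_\bfL^\rmT\bfO_\bfL=\bfL$, obtaining
\[
\tr\bigl(\bfO_{\wtilde\bfR}\bfX^\rmT\bfL\bfY\bfO_{\wtilde\bfR}^\rmT\bigr).
\]
By the hypothesis of $(\bfL,\bfR)$-left orthogonality, $\bfX^\rmT\bfL\bfY=\mathbf{0}$, so the whole expression vanishes, which is exactly the $(\bfL,\wtilde\bfR)$ $\rmF$-orthogonality. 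Note that the argument never uses $\bfR$, so the conclusion holds for an arbitrary $\wtilde\bfR$; I would also remark that the calculation is independent of the particular choice of square roots $\bfO_\bfL$ and $\bfO_{\wtilde\bfR}$ because only the products $\bfO_\bfL^\rmT\bfO_\bfL$ and $\bfO_{\wtilde\bfR}^\rmT\bfO_{\wtilde\bfR}$ enter.

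I do not anticipate any serious obstacle: the proof is essentially a one-line trace manipulation. The only point that requires a moment of care is making sure the consistency condition needed to define $\langle\cdot,\cdot\rangle_{\rmF,(\bfL,\wtilde\bfR)}$ is implicit in the statement (otherwise the conclusion should be read as "for any $\wtilde\bfR$ such that $(\bfL,\wtilde\bfR)$ is consistent with $\bfX$ and $\bfY$"); the trace identity itself makes sense without this restriction, so no additional work is needed.
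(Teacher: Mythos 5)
Your proof is correct and follows exactly the route the paper indicates: the paper derives the proposition ``from the representation of the Frobenius scalar product through the trace of matrix multiplication,'' which is precisely your cyclic-trace computation producing the factor $\bfX^\rmT\bfL\bfY=\mathbf{0}$. Your closing remark about the consistency condition on $(\bfL,\wtilde\bfR)$ matches the caveat the paper itself makes in Remark~\ref{rem:noagree}.
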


\begin{corollary}
\label{col:F_orth}
  Let $\bfL$ be the identity matrix and $\bfX$ and $\bfY$ be $(\bfL,\bfR)$ left-orthogonal for some matrix $\bfR$.
  Then  the conventional $\rmF$-orthogonality of $\bfX$ and $\bfY$ holds and $\|\bfX+\bfY\|_\rmF^2=\|\bfX\|_\rmF^2+\|\bfY\|_\rmF^2$.
\end{corollary}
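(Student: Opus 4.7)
The plan is to derive this directly from the preceding proposition combined with the definition of the conventional Frobenius inner product. Since $\bfL$ is the identity, the $(\bfL,\bfR)$ left-orthogonality assumption $\bfX^\rmT \bfL \bfY = \mathbf{0}$ reduces to $\bfX^\rmT \bfY = \mathbf{0}$. First I would invoke the previous proposition with $\widetilde\bfR$ chosen to be the identity matrix in $\cM_{K,K}$: this yields $\langle \bfX, \bfY \rangle_{\rmF,(\bfI,\bfI)} = 0$, which is literally the standard Frobenius inner product $\langle \bfX, \bfY \rangle_\rmF$. (Equivalently, one can skip the appeal to the proposition and observe directly that $\langle \bfX, \bfY \rangle_\rmF = \tr(\bfX^\rmT \bfY) = \tr(\mathbf{0}) = 0$, since left-orthogonality with $\bfL=\bfI$ gives $\bfX^\rmT \bfY = \mathbf{0}$.)

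Once conventional $\rmF$-orthogonality is established, the Pythagorean identity is a one-line expansion:
\begin{equation*}
\|\bfX+\bfY\|_\rmF^2 = \langle \bfX+\bfY, \bfX+\bfY\rangle_\rmF = \|\bfX\|_\rmF^2 + 2\langle \bfX, \bfY\rangle_\rmF + \|\bfY\|_\rmF^2 = \|\bfX\|_\rmF^2 + \|\bfY\|_\rmF^2,
\end{equation*}
using bilinearity and symmetry of the Frobenius inner product together with the vanishing of $\langle \bfX, \bfY\rangle_\rmF$.

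There is no real obstacle here: the corollary is essentially a specialization of the preceding proposition to $\widetilde\bfR = \bfI$, and the norm identity is just the ordinary parallelogram-type calculation that always accompanies orthogonality. The only minor point worth noting explicitly is that $(\bfI,\bfI)$ is trivially consistent with any matrices $\bfX, \bfY \in \cM_{L,K}$, so the hypothesis for applying the proposition is satisfied without further verification.
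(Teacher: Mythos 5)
Your proposal is correct and matches the paper's intended argument: the corollary is stated as an immediate consequence of the preceding proposition (whose proof is the trace identity $\langle\bfX,\bfY\rangle_\rmF=\tr(\bfX^\rmT\bfY)$), specialized to $\bfL=\wtilde\bfR=\bfI$ so that left-orthogonality gives $\bfX^\rmT\bfY=\mathbf{0}$ and hence conventional $\rmF$-orthogonality. The Pythagorean expansion of $\|\bfX+\bfY\|_\rmF^2$ is then exactly the one-line computation the paper leaves implicit.
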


Corollary~\ref{col:F_orth} shows that if at least in either row or column matrix spaces the conventional inner product is given,
that is, vectors are orthogonal in the ordinary sense,
then the conventional $\rmF$-orthogonality can be considered and
$\rmF$-norm and $\rmF$-inner product can be used to measure the approximation accuracy and the component orthogonality.

\begin{remark}
\label{rem:noagree}
The introduced definitions and statements are appropriate
if $\bfL$ and $\bfR$ are consistent with the matrices $\bfX$ and $\bfY$ (see Definition~\ref{def:agreed}).
Otherwise, e.g., \eqref{eq:ABcorr} can be formally calculated, but this measure will reflect only the correlation between
projections of columns and rows of $\bfX$ and $\bfY$ on the row spaces of $\bfL$ and $\bfR$ correspondingly.
\end{remark}

Let us remark that the conventional Frobenius norm is an interpretable characteristic of approximation, while
the norm based on  $\langle \cdot, \cdot \rangle_{\rmF,(\bfL,\bfR)}$ is much worse interpretable,
since it is equivalent to approximation by the Frobenius norm of the matrix
$\bfO_\bfL \bfX \bfO_\bfR^\rmT$.

\section*{Acknowledgments}
The authors are grateful to the anonymous reviewers and the editor for their useful comments and suggestions,
and to Konstantin Usevich for fruitful discussions which helped to considerably improve the paper.


\end{document}